\documentclass[11pt,onecolumn]{article}

\usepackage{hyperref}       
\usepackage{url}            
\usepackage{booktabs}       
\usepackage{amsfonts}       
\usepackage{nicefrac}       
\usepackage{microtype}      
\usepackage{xcolor}         

\usepackage{relsize,amssymb}
\usepackage{stmaryrd}
\usepackage{ragged2e}

\usepackage{amsthm}
\usepackage{amsmath}
\usepackage{soul}
\usepackage{graphicx}
\usepackage{array}
\usepackage{subcaption}
\usepackage[vlined,linesnumbered,ruled,resetcount]{algorithm2e}

\usepackage[left=1.0in,top=0.5in,right=1.05in,bottom=1.0in,nohead,textheight=10in,footskip=0.3in]{geometry}

\newtheorem{theorem}{Theorem}
\newtheorem{corollary}[theorem]{Corollary}

\newtheorem{definition}[theorem]{Definition}
\newtheorem{proposition}[theorem]{Proposition}

\def\calY{\mathcal{Y}}
\def\calI{\mathcal{I}}
\def\calL{\mathcal{L}}
\def\calF{\mathcal{F}}
\def\calX{\mathcal{X}}
\def\calG{\mathcal{G}}
\def\calV{\mathcal{V}}
\def\calE{\mathcal{E}}

\def\calA{\mathcal{A}}
\def\calC{\mathcal{C}}
\def\thetabar{{\overline{\theta}}}
\DeclareMathOperator*{\argmin}{argmin}

\newcommand{\eqdef}{{\stackrel{\mbox{\tiny \tt ~def~}}{=}}}

\def\myparagraph#1{\vspace{0pt}\noindent{\bf #1~~}}

\let\citep\cite

\begin{document}


\title{\Large\bf  }
\author{}
\title{
\Large\bf  \vspace{-20pt} Tighter relaxations for MAP-MRF optimization \\ via Singleton Arc Consistency
}
\author{
  Asaf Lev-Ran, Pavel Arkhipov, Vladimir Kolmogorov\\
  \\
  Institute of Science and Technology Austria (ISTA)\\
  \texttt{\{asaf.lev-ran,pavel.arkhipov,vnk\}@ist.ac.at}
}
\date{}
\maketitle
\begin{abstract}
    We consider the MAP-MRF inference task, that is, minimizing a function of discrete variables represented as a sum of unary and pairwise terms. 
    A prominent approach for tackling this NP-hard problem in practice is to solve its natural LP relaxation and then iteratively tighten the relaxation
    by adding {\em clusters}. Based on some theoretical observations, we propose a new technique for identifying such clusters. It works by running 
    the {\em Singleton Arc Consistency} algorithm in a certain CSP instance. Experimental results indicate that the new tightening technique outperforms
    the previous approach by~\cite{sontag_frustrated_cycles} that searches for {\em frustrated cycles}.
    Our code will be made available at \url{https://github.com/vnk-ist/MAP-MRF/}.
\end{abstract}

\section{Introduction}
This paper focuses on the problem of minimizing functions of discrete variables represented as a sum of unary and pairwise terms:
\begin{equation}
    \label{eq_objective_original}
    \min_{x\in\calX} f[\theta](x),\qquad\quad f[\theta](x)= \sum_{v \in V} \theta_v(x_v) + \sum_{\{u, v\} \in E} \theta_{uv}(x_u, x_v) 
\end{equation}
The problem is specified by a graph $G = (V, E)$, a discrete set of labels $\calX_v$ for each node $v\in V$,
unary costs $\theta_v:\calX_v\rightarrow\mathbb R$ for each $v\in V$, and pairwise costs $\theta_{uv}:\calX_u\times\calX_v\rightarrow\mathbb R$
for each $(u,v)\in E$.
The set of labelings is defined as $\calX=\prod_{v\in V} \calX_v$.

Under a probabilistic interpretation, this problem is known as
{\em MAP inference in MRFs or CRFs}, i.e., maximum a posteriori inference in
Markov random fields or conditional random fields, in the computer vision and
machine learning communities~\citep{Blake2011MRFVision}. In the constraint
programming and constraint satisfaction communities, problem~\eqref{eq_objective_original} is known as {\em Weighted CSPs} or
{\em Cost Function Networks}~\citep{Cooper2010SoftArcConsistency}.

MAP-MRF optimization was a prominent approach in computer vision before the
deep-learning era, especially for problems that could be formulated as discrete
labelings with spatial smoothness, such as image segmentation, stereo, restoration,
and object labeling~\citep{Szeliski2008ComparativeMRF,Blake2011MRFVision}.
More recently, MRF/CRF ideas are often used as structured components coupled to
learned representations rather than as standalone hand-engineered models. Examples
include learned potentials and neural message passing for stereo matching
\citep{Guan2024NeuralMRF}, learned MRF layers over discrete image tokens for
efficient text-to-image generation~\citep{Jayasumana2024MarkovGen},
SAM-based area matching formulated as an MRF energy minimization problem solved
by graph cuts~\citep{Zhang2024MESA}, neuralized MRFs for interaction-aware human
trajectory prediction~\citep{Fang2025NeuralizedMRF}, and voxel-adaptive message
passing for deformable CT registration~\citep{Zhang2025VoxelOpt}.

Applications outside computer vision include sequence labeling tasks such as
named entity recognition, part-of-speech tagging, and chunking
\citep{MaHovy2016BiLSTMCNNCRF,Lample2016NeuralNER,Wei2021MaskedCRF}, protein
design~\citep{Traore2013CPD,Allouche2014CPDOptimization}, and recent deep
protein-design methods that incorporate MRF-style components
\citep{Ren2024CarbonDesign,Ren2024CarbonNovo}. Related pairwise MAP formulations
also arise in Markov-logic knowledge-base reasoning~\citep{Fang2023MLN4KB},
while the binary case is widely studied in operations research as QUBO or
UBQP~\citep{Kochenberger2014UBQPSurvey,Glover2022QUBO}.


\myparagraph{LP relaxations and tightening} Although problem~\eqref{eq_objective_original} is NP-hard in general,
it has been demonstrated that many practical instances can be solved to optimality.
A prominent approach is {\em cluster-based tightening}~\citep{sontag_tightening}
which starts by solving an LP relaxation of~\eqref{eq_objective_original} and then iteratively tightens
the relaxation by adding clusters of nodes. The success of this approach depends crucially on which
clusters are added. Early approaches chose clusters from a fixed dictionary, e.g.\ the set of all triplets
(3-cycles in the graph)~\citep{sontag_tightening,komodakis_repairing_cycles,batra_LPDG}.
\cite{sontag_frustrated_cycles} showed how to avoid fixed dictionaries by
proposing an algorithm for computing ``frustrated cycles'', while showing that after adding these cycles the dual objective can be improved.

The main contribution of this paper is a new algorithm for computing clusters. 
It is based on running a {\em Singleton Arc Consistency} (SAC) algorithm on a certain CSP instance.
We prove several properties of this algorithm. Roughly speaking, they can be formulated
as follows: (1) If SAC produces a cluster then after adding this cluster either the dual objective can be improved or some fractional vertices are cut off from the local polytope.
(2) If the instance has at least one frustrated cycle then SAC will find a cluster.
For more formal statements we refer to Sections~\ref{sec_theory} and~\ref{sec:frustrated}.

Our experimental results show our SAC tightening procedure outperforms the method in~\citep{sontag_frustrated_cycles}
by consistently finding the same or better bounds.

\myparagraph{Other related work}
\cite{sontag_outer_bounds} proposed to tighten relaxations of MAP-MRF problems
by adding cutting planes that generalize  {\em odd-cycle inequalities} for MAX-CUT.
Their separation algorithm works in the primal domain,
and requires a (fractional) primal solution of the relaxation. In contrast,
we focus on tightening techniques that use only dual solutions (``reparameterization''),
since relaxations of large-scale problems are often solved via message-passing techniques
that work only in the dual domain.

Tightening techniques require introducing high-order terms, which can become expensive when the number of labels
is large. \cite{sontag_coarse_partitions} address this issue by proposing to use {\em coarse partitions} on labels
for high-order terms. Note that this approach is orthogonal to the question studied in this paper.

\cite{werner_reparameterization_is_duality} presented an approach
based on the language of Constraint Satisfaction Problems (CSPs), by showing 
that all unsatisfiable CSP subproblems could potentially tighten the relaxation.
However, no concrete algorithm for identifying such subproblems was proposed.

In this work we use a particular CSP algorithm, namely Singleton Arc Consistency (SAC).
This algorithm was recently used in \citep{dlask_superreparameterization}
for identifying {\em superreparameterizations} that improve the lower bound.
At each iteration they replace current vector $\theta$ with another vector $\thetabar$
that satisfies $f[\thetabar](x)\le f[\theta](x)$ for all labelings $x$,
and guarantees that $\thetabar$ gives a better lower bound than $\theta$.
This approach is not comparable with the tightening techniques that we consider:
it can be computationally less costly since it does not introduce any high-order terms,
but it may also decrease the minimum  $\min_{x\in\calX} f[\theta](x)$
(which would never be ``undone'' in later iterations).

Note that a simpler Arc Consistency (AC) algorithm has been used for approximately solving the standard LP relaxation of~\eqref{eq_objective_original} (without tightening). In particular, an AC-based algorithm for establishing the {\em Virtual Arc Consistency} has been given in~\citep{CooperGSSZ08,Cooper2010SoftArcConsistency}, and is similar to the {\em Augmenting DAG} algorithm~\citep{Schlesinger1976,werner_reparameterization_is_duality}.

There are also alternative tightening approaches based on more expensive computational tools such as semidefinite programming relaxations, see e.g.~\citep{pmlr-v162-durante22a}.
We limit the scope of this work to LP relaxations, which arguably can scale to larger problems.

A survey of tightening techniques for MAP-MRF optimization can be found in~\citep{kannan2019tighter}.

\section{Background}\label{sec:background}
The cluster-based tightening approach~\citep{sontag_tightening} starts with the objective function in~\eqref{eq_objective_original}
and then repeats the following procedure:
\begin{itemize}
\item Add some zero-valued higher-order terms to the current objective, leading to the following optimization problem:
\begin{equation}
    \label{eq:f} 
    \min_{x\in\calX} f[\theta](x),\qquad\quad f[\theta](x)= \sum_{\alpha\in\calF} \theta_\alpha(x_\alpha)  
\end{equation}
Here $\calF\subset 2^V$ is a set of non-empty subsets of $V$, called {\em factors}.
For a factor $\alpha \in \calF$ we denote $\calX_\alpha\eqdef\prod\limits_{v\in\alpha}\calX_v$,
and for a labeling $x\in\calX$ we let $x_\alpha\in \calX_\alpha$ 
be the restriction of $x$ to $\alpha$.
By construction, $\theta_\alpha(x_\alpha)=0$ for all $\alpha\in\calF,x_\alpha\in\calX_\alpha$ with $|\alpha|\ge 3$.
\item Approximately solve a natural LP relaxation of~\eqref{eq:f}, or rather its dual.
\end{itemize}
Note that although discrete optimization problems~\eqref{eq_objective_original} and~\eqref{eq:f} are equivalent,
their LP relaxations are not: the lower bound given by the relaxation of~\eqref{eq:f} is the same or larger than that of~\eqref{eq_objective_original}.
Below we describe these two steps in more detail. 

\subsection{LP relaxation}
A class of relaxations of~\eqref{eq:f} has been presented in~\citep{werner_reparameterization_is_duality}.
Each relaxation is specified by a directed acyclic graph $(\calF,J)$
where $J$ is a set of pairs of the form $(\alpha,\beta)$ with $\alpha,\beta\in\calF$ and $\beta\subset\alpha$.
Define the {\em $J$-based local polytope} $\calL(J)$ as follows:
\begin{equation}
\hspace{1pt}\calL(J)=\left\{ \mu\ge 0 \;
\begin{picture}(1,0)
  \put(3,-20){\line(0,1){44}}
\end{picture}
\; \mbox{\begin{tabular}{ll} 
\;\;\; $\;\mathlarger\sum\limits_{x_\alpha}\hspace{14pt}\mu_\alpha(x_\alpha)=1$ & $\forall \alpha\in\calF\!\!\!$ \\
$\mathlarger\sum\limits_{x_{\alpha}:x_\alpha\sim x_\beta}\mu_\alpha(x_\alpha) =\mu_\beta(x_\beta)$  \hspace{20pt} & $\forall(\alpha,\beta)\in J,x_\beta\!\!\!$ 
\end{tabular}} \right\}\hspace{-5pt}
\label{eq:LocalPolytope}
\end{equation}
Here notation $x_\alpha\sim x_\beta$ for factors $\alpha,\beta\subseteq V$
means that $x_\alpha\in \calX_\alpha$ and $x_\beta\in \calX_\beta$
are labelings that agree on the overlap $\gamma=\alpha\cap\beta$, i.e. $(x_\alpha)|_\gamma=(x_\beta)|_\gamma$.
A relaxation of~\eqref{eq:f} is now given by
\begin{equation}\label{eq:LP}
\min_{\mu\in\calL(J)} \sum_{\alpha\in\calF,x_\alpha\in\calX_\alpha} \theta_\alpha(x_\alpha)\mu(x_\alpha)
\end{equation}

In this work we will tighten the relaxation only with triplets, i.e.\ only add subsets $\alpha\subseteq V$ with $|\alpha|=3$.
If such $\alpha$ contains a pair of nodes $u,v$ with $\{u,v\}\notin E$ then we also add such factor $\{u,v\}$ to $\calF$
(with zero costs). Set $J$ will always be the Hasse diagram of the poset on $\calF$: 
$J=\{(\alpha,\beta)\::\alpha,\beta\in \calF,\beta \subset\alpha, |\alpha|=|\beta|+1\}$.
Note that when adding a new factor we check whether it already exists; if yes, then we do not duplicate it.


\myparagraph{Reparameterizations and the dual problem}
It is straightforward to check that the following transformation does not change the function $f[\theta]$:
pick edge $(\alpha,\beta)\in J$, labeling $x_\beta\in\calX_\beta$, value $\delta\in\mathbb R$,
and then update $\theta_\beta(x_\beta)\rightarrow \theta_\beta(x_\beta)+\delta$ and $\theta_\alpha(x_\alpha)\rightarrow \theta_\alpha(x_\alpha)-\delta$
for all $x_\alpha\sim x_\beta$. Such operation is called a {\em reparameterization}.
The set of all reparameterizations can be described by a vector $\lambda$
with components $\lambda_{\alpha\beta}(x_\beta)$ for all $(\alpha,\beta)\in J$, $x_\beta\in\calX_\beta$ using the following rule:
\begin{equation}\label{eq:rep}
\thetabar_\beta(x_\beta)=\theta_\beta(x_\beta)+\sum_{(\alpha,\beta)\in J} \lambda_{\alpha\beta}(x_\beta)- \sum_{(\beta,\gamma)\in J} \lambda_{\beta\gamma}(x_\gamma)
\end{equation}
where $x_\gamma$ in the last term is the restriction of $x_\beta$ to $\gamma$.
Vector $\lambda_{\alpha\beta}\in\mathbb R^{\calX_\beta}$ is also known as a {\em message} from $\alpha$ to $\beta$.
We will write $\thetabar=\theta^\lambda$ for vector $\thetabar$ obtained from $\theta,\lambda$ as in~\eqref{eq:rep}.
Clearly, we have $f[\theta^\lambda](x)=f[\theta](x)$ for any $\theta,\lambda$ and $x\in\calX$.

For each vector $\theta$ the following expression gives a lower bound on $\min_{x\in\calX} f[\theta](x)$:
\begin{equation}
\Phi(\theta)=\sum_{\alpha\in\calF} \min_{x_\alpha} \theta_\alpha(x_\alpha)
\end{equation}
The dual problem to~\eqref{eq:LP} is equivalent to 
maximizing this bound over messages $\lambda$ (see~\citep{werner_reparameterization_is_duality}):
\begin{equation}\label{eq:dual}
\max_\lambda \Phi(\theta^\lambda)
\end{equation}

\myparagraph{Message passing algorithms} Solving~\eqref{eq:LP} or its dual~\eqref{eq:dual} exactly
using generic LP solvers can be too expensive in practice for large-scale problems.
A large number of approximate techniques have been proposed for this problem.
A prominent family are {\em message passing} algorithms which maximize the bound $\Phi(\theta^\lambda)$
via a block-coordinate ascent~\citep{trws,mplp,werner_reparameterization_is_duality,sontag_tightening,srmp,Swoboda2017DualAscent}.
In this work we use the SRMP algorithm~\citep{srmp}.
Note that for pairwise MAP-MRF problems it is equivalent to TRW-S~\citep{trws}.

\subsection{Cluster pursuit via frustrated cycles}\label{sec:FR:intro}
Next, we discuss which clusters to add. As described in the introduction,
the technique which is most relevant to our paper is~\citep{sontag_frustrated_cycles}.
It works as follows. First, it selects a set $\Pi_v$ for each node $v\in V$,
where each $\pi=(\pi^+,\pi^-)\in\Pi_v$ is a partition of $\calX_v$ into two non-empty sets: $\calX_v=\pi^+\sqcup \pi^-$.
Let $\Pi$ be the union of partitions $\Pi_v$ over $v\in V$ (which we assume to be disjoint).
Now consider partitions $\pi_u\in \Pi_u$, $\pi_v\in \Pi_v$.
If $\{u,v\}\in E$ then define 
$$
w(\pi_u,\pi_v)=
 \min_{\substack{(a,b)\in\calX_u\times \calX_v \\ \llbracket a\in\pi_u^+\rrbracket\:\ne\: \llbracket b\in\pi_v^+\rrbracket}}\theta_{uv}(a,b)
-\min_{\substack{(a,b)\in\calX_u\times \calX_v \\ \llbracket a\in\pi_u^+\rrbracket\:=\: \llbracket b\in\pi_v^+\rrbracket}}\theta_{uv}(a,b)
$$
where $\theta$ is the current reparameterization, and $\llbracket \cdot\rrbracket$ is the Iverson bracket notation. Otherwise set $w(\pi_u,\pi_v)=0$.
Let $\calG=(\Pi,\calE)$ be an undirected graph with the edges
 $\calE=\{\{\pi_u,\pi_v\}\::\:|w(\pi_u,\pi_v)|>\varepsilon\}$ where $\varepsilon$ is a fixed positive constant.
 Call edge $\{\pi_u,\pi_v\}\in\calE$ {\em positive} if $w(\pi_u,\pi_v)>\varepsilon$, and {\em negative} if 
 $w(\pi_u,\pi_v)<-\varepsilon$. Now suppose that $\calG$ has a cycle $\calC$ that contains an odd number of negative edges;
 such a cycle is called {\em frustrated}. Let $C$ be the cycle in $G$ obtained from $\calC$ by replacing each node $\pi_v\in\Pi_v$ of the cycle with $v$.
 It can be seen that for any labeling $x\in\calX$ there exists an edge $\{u,v\}\in C$ 
 such that $\theta_{uv}(x_u,x_v)>\min\limits_{(a,b)\in\calX_u\times\calX_v}\theta_{uv}(a,b)+\varepsilon$. Motivated by this observation, \cite{sontag_frustrated_cycles}
 propose to tighten the relaxation using  cycle $ C$. This can be achieved by first triangulating $C$ and then
 adding each triplet as a high-order term. As \cite{sontag_frustrated_cycles} observe,
  after this operation there exists a reparameterization that increases the lower bound by at least $\varepsilon$.

\cite{sontag_frustrated_cycles} also show that frustrated cycles in $\calG$ can be found in linear time.
This defines an algorithm for tightening relaxations.
We discuss further details of this algorithm in Section~\ref{sec:frustrated}.

\subsection{CSPs and Singleton Arc Consistency}
A {\em Constraint Satisfaction Problem} instance is given by a set of nodes $V$, a discrete domain $\calX=\prod_{v\in V}\calX_v$,
a set of factors $\calF\subset 2^V$, and relations $\calY_\alpha\subseteq \calX_\alpha$ for each $\alpha\in\calF$.
We will denote the instance  as a tuple $\calI=(V,\calX,\calF,\calY)$, or just as $\calY$ when $V,\calX,\calF$ are clear from the context.
We will write $\calY\preceq \calY'$ if $\calY_\alpha\subseteq \calY'_\alpha$ for all $\alpha\in\calF$.
A {\em solution} to $(V,\calX,\calF,\calY)$ is a labeling $x\in\calX$ satisfying $x_\alpha\in \calY_\alpha$ for all $\alpha\in\calF$.
The instance is called {\em non-empty} if $\calY_\alpha\ne\varnothing$ for all $\alpha\in\calF$, and {\em satisfiable}
if it has at least one solution.

\begin{definition}
Consider CSP instance $\calI=(V,\calX,\calF,\calY)$ and graph $(\calF,J)$.
Instance $\calI$ is called {\em $J$-arc-consistent} if every $(\alpha,\beta)\in J$
satisfies $\{x_\beta\in\calX_\beta\::\:x_\alpha\sim x_\beta{\mbox ~for~\!~some~ }x_\alpha\in \calY_\alpha\}=\calY_\beta$.
It is {\em $J$-consistent} if there exists non-empty instance $\calY'\preceq \calY$ which is $J$-arc-consistent.
\end{definition}

For a vector $\theta$ and value $\varepsilon\ge 0$ we define the CSP instance $CSP_\varepsilon(\theta)=(V,\calX,\calF,\calY)$ by setting 
$\calY_\alpha=\{x_\alpha\::\:\theta_\alpha(x_\alpha) \le \min_{y_\alpha\in\calX_\alpha}\theta_\alpha(y_\alpha)+\varepsilon\}$.
The following fact is well-known~\citep{werner_reparameterization_is_duality,Cooper2010SoftArcConsistency,srmp}.
\begin{proposition} Consider instance $\calI=CSP_0(\theta)$ for a vector $\theta$. \\
{\rm (a)} If $\calI$ is satisfiable then $\Phi(\theta)=\min_{x\in\calX} f[\theta](x)$,
and $x\in\calX$ is a solution of $\calI$ if and only if $x\in\argmin_{x\in\calX} f[\theta](x)$. \\
{\rm (b)} Reparameterization $\calI$ is $J$-consistent if and only if $\theta$ is a local maximum of a block-coordinate
ascent algorithm such as SRMP, i.e.\ no sequence of operations of SRMP can increase $\Phi(\theta)$.
\end{proposition}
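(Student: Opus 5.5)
For part (a) I would use a simple sandwich argument. For every labeling $x$ and every factor $\alpha$ we have $\theta_\alpha(x_\alpha)\ge\min_{y_\alpha}\theta_\alpha(y_\alpha)$, and summing over $\alpha\in\calF$ gives $f[\theta](x)\ge\Phi(\theta)$. Equality holds if and only if every term attains its minimum, i.e.\ $x_\alpha\in\calY_\alpha$ for all $\alpha$; this is exactly the statement that $x$ is a solution of $\calI=CSP_0(\theta)$. If $\calI$ is satisfiable, any solution $x^\ast$ has $f[\theta](x^\ast)=\Phi(\theta)\le\min_x f[\theta](x)$, so the minimum equals $\Phi(\theta)$. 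It follows that the minimizers are precisely the labelings attaining the bound, which are precisely the solutions. This part is routine.

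For part (b) I would go through LP duality and complementary slackness for the pair~\eqref{eq:LP}/\eqref{eq:dual}. I read ``local maximum of block-coordinate ascent'' as a fixed point that cannot be improved by any sequence of SRMP updates.

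\textbf{Direction ($\Leftarrow$).} Suppose $\theta$ is such a fixed point. I would run the standard argument that a failure of $J$-consistency gives an improving sequence; this is the Augmenting DAG / virtual arc consistency argument of Schlesinger and Cooper et al. First, run the arc-consistency propagation on $\calY$. It removes $x_\beta$ whenever it has no support $x_\alpha\in\calY_\alpha$ along an edge $(\alpha,\beta)\in J$, and removes $x_\alpha$ whenever its restriction $x_\beta$ has been removed. If the result is non-empty it is a $J$-arc-consistent $\calY'\preceq\calY$, and we are done. Otherwise some $\calY_\alpha$ becomes empty. Recording the removal steps as a DAG of ``blame'', one constructs small messages $\delta$ along the traced edges, in reverse order of removal. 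These messages raise $\min\theta_\alpha$ for the wiped-out factor and do not decrease the minima of the other factors. Each such step is realizable by the min-sum updates that SRMP performs on the edges $(\alpha,\beta)$, which yields the contradiction.

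\textbf{Direction ($\Rightarrow$).} Assume a non-empty arc-consistent $\calY'\preceq\calY$ exists. I would show that no single SRMP block update increases $\Phi$ and that the property is preserved, by choosing the updates to keep the zero-level minimizer sets containing $\calY'$. A cleaner alternative is to cite the characterization in~\citep{srmp}: SRMP fixed points coincide with the ``weak tree agreement'' / arc-consistency condition of $CSP_0(\theta)$.

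The main obstacle is the ($\Leftarrow$) construction with a uniform $\delta$ across the DAG, and making precise that those updates are SRMP operations. I would handle it by citing~\citep{werner_reparameterization_is_duality} and~\citep{Cooper2010SoftArcConsistency}, where exactly this equivalence is proved.
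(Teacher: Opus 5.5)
Part (a) is correct; it is the standard sandwich argument. The paper itself proves neither part and only cites Werner, Cooper et al.\ and the SRMP paper.

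The genuine gap is in direction ($\Leftarrow$) of (b). An empty AC closure of $CSP_0(\theta)$, converted into augmenting-DAG messages, shows only that \emph{some} reparameterization increases $\Phi$. The statement needs that SRMP's \emph{own} operations do. Your bridge, ``each such step is realizable by the min-sum updates that SRMP performs'', does not hold. An SRMP operation is a fixed map: it collects the min-marginals of the parent factors into $\beta$, then returns prescribed fractions $\omega_{\alpha\beta}$ of the result along the chosen edges simultaneously. It cannot in general be made to shift a small amount $\delta$ on a single label of a single message, which is exactly what the augmenting-DAG construction uses. The results you fall back on from Werner and Cooper et al.\ concern the augmenting-DAG/VAC procedure, whose elementary moves are precisely such arbitrary equivalence-preserving transformations, so they do not close this step. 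That SRMP stalls at the same points as that procedure (arc consistency of $CSP_0(\theta)$) is the non-trivial content of (b). It cannot be obtained by identifying the two kinds of moves.

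What is missing is an argument about SRMP's own dynamics, with three pieces.
\begin{enumerate}
\item As long as $\Phi$ does not increase, SRMP passes reproduce the AC removals of $CSP_0(\theta)$: a label without support acquires a strictly positive excess over the minimum, which the update then passes on to the factors it feeds.
\item A monotonicity invariant prevents these removals from being undone.
\item Since the AC closure has an empty factor and there are finitely many labels, after finitely many passes $\Phi$ must strictly increase.
\end{enumerate}
This is the SRMP analogue of the weak-tree-agreement result for TRW-S, and it is what the SRMP reference supplies.

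In direction ($\Rightarrow$) your invariant, keeping $\calY'$ inside the minimizer sets, is the right one, but there is nothing to `choose'. You must check that SRMP's prescribed update preserves it. It does: by $J$-arc-consistency every $x_\beta\in\calY'_\beta$ has a support in $\calY'_\alpha$ and every $x_\alpha\in\calY'_\alpha$ restricts into $\calY'_\beta$. So these labelings attain the minima of the min-marginals and of the redistributed terms. Moreover $\Phi$ is unchanged, because on each edge involved in the update a consistent pair $x_\alpha\sim x_\beta$ from $\calY'$ attains both minima before and after the move, and the move cancels on such a pair.
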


\subsection{2-CSPs}\label{sec:twoCSP}
We say that instance $\calI=(V,\calX,\calF,\calY)$ is a {\em $k$-CSP} if $|\alpha|\le k$ for all $\alpha\in\calF$.
We will be mainly working with the 2-CSP instance $CSP_\varepsilon^{\le 2}(\theta)$
obtained from $CSP_\varepsilon(\theta)$ by removing all factors $\alpha$ with $|\alpha|\ge 3$.
In this section we discuss some algorithms for 2-CSPs.
We will assume that $\calF=V\cup E$ where $E\subseteq\binom{V}2$ is a set of edges.
We say that it is {\em arc-consistent} if it is $J$-arc-consistent
for the set of edges $J=\{ev\::\:e\in E,v\in e\}$.

\myparagraph{Arc consistency (AC) algorithm}
Let us consider the following problem: given instance $\calI=(V,\calX,V\cup E,\calY)$,
compute a maximal instance $\calY'\subseteq \calY$ which is arc-consistent. Clearly, this can be done by
repeating the following two operations while possible.
\begin{enumerate}
\item Pick pair $(a,b)\in \calY_{uv}$ for $\{u,v\}\in E$ s.t.\ $a\notin \calY_u$.
Remove $(a,b)$ from $\calY_{uv}$.
\item Pick label $a\in \calY_u$ for $\{u,v\}\in E$ s.t.\ $(a,b)\notin \calY_{uv}$ for all $b\in\calX_v$.
Remove $a$ from~$\calY_u$.
\end{enumerate}

These operations can be implemented in $O(|E|\cdot(\max_v|\calX_v|)^2)$ time~\citep{bessiere2005optimal}.
Furthermore, if it produces an empty CSP (i.e.\ if the domain of one of the variables becomes empty) then
it is possible to find a {\em minimal} sequence of such operations in the same time.

\myparagraph{Singleton arc consistency (SAC)}
Instance $\calI=(V,\calX,V\cup E,\calY)$ is called {\em singleton arc consistent}
if the following holds for each $v\in V$ and $a\in \calY_v$:
fixing the label of $v$ to $a$ (i.e.\ replacing $\calY_v$ with $\{a\}$) and then running the arc consistency algorithm gives a non-empty instance~\citep{debruyne1997some}.
Such consistency can be achieved by repeating the following operation: pick node $v\in V$ and label $a\in \calY_v$,
fix the label of $v$ to $a$, run the AC algorithm; if it results in an empty CSP then remove $a$ from $\calY_v$. Efficient implementations of such an algorithm have been studied in~\citep{bartak2004new}.

\section{SAC tightening theory}
\label{sec_theory}
Let $\calF$ be the current set of factors, and $\theta$ be the current reparameterization. 
Suppose that fixing node $r\in V$ to label $s\in\calY_r$ in the CSP instance $\calI=(V,\calX,\calF,\calY)=CSP_{\varepsilon}^{\le 2}(\theta)$ and then running the arc consistency (AC) algorithm makes the CSP instance empty. From the results of~\cite{werner_reparameterization_is_duality} we can conclude that after adding factor $\alpha$ to the relaxation the lower bound can be strictly increased, where
$\alpha$ is the set of all nodes that were touched by the AC algorithm. In this section we will refine this result. Based on this refinement, in the next section we will then propose a specific algorithm for tightening. The proofs are given in the appendix.

Throughout this section, let $\{a_i\}_{i=1,\ldots,K}$ be the sequence of labels that were removed during AC, with $a_i \in \calX_{v_i}$. Let $u_i$ be the vertex that caused the deletion of $a_i$, i.e. $\{u_i, v_i\} \in E$, and for no $b \in R_{u_i}$ we had $(b, a_i) \in R_{u_i v_i}$. The value of $\varepsilon$ is fixed throughout the section. The AC run can terminate either with the deletion of $s$ (in this case, $a_K = s$) or with the deletion of the last label in some other vertex $v \neq r$. In the latter case, we artificially extend the sequence $a_i$ with $s$. Thus, we now always have $a_K = s$, $v_K = r$. Consider the relaxation obtained from the current one by adding triplets $\{r,u_i,v_i\}$ for $i=1,\ldots,K$ to $\calF$,
as described in Section~\ref{sec:background}.
Assume that vector $\theta$ is set to zero for newly added factors, and let $\calF'$ be the new set of factors. 
We can prove that our tightening is indeed a strict tightening, in the sense of the theorem stated below.
\begin{theorem}
    \label{th_strict_tightening}
    There exists a reparameterization $\thetabar$ of $\theta$ such that $\min_{x_\alpha}\thetabar_\alpha(x_\alpha)\ge \min_{x_\alpha}\theta_\alpha(x_\alpha)$ for all $\alpha\in\calF'$, and $\thetabar_r(s)>\theta_r(s)$.
\end{theorem}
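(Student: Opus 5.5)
The plan is to replay the AC run that follows fixing $r$ to $s$ as an explicit sequence of reparameterization steps. Each step moves a small positive amount of cost into the entry $(s,a_i)$ of the pairwise factor $\{r,v_i\}$. At the end, this cost is collected into $\thetabar_r(s)$. Every pairwise factor $\{r,v\}$ needed here exists in $\calF'$, either as an original edge or as a zero factor added together with the triplet $\{r,u_i,v_i\}$. The triplets $\{r,u_i,v_i\}$ act as the channels through which cost flows. We fix a small $\delta_0>0$ below the slack $\varepsilon$ and below the gap of every strict inequality $\theta_{uv}(b,a)>\min\theta_{uv}+\varepsilon$ used by the run. We then assign amounts $\delta_1,\dots,\delta_K>0$ that decrease geometrically in reverse order (for instance $\delta_i=\delta_0\,(2K)^{i-K}$, after rescaling), so that a label deleted early carries enough cost to pay for all later deletions that rely on it.

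The invariant to maintain by induction on $i$ has three parts. First, $\thetabar_{rv_i}(s,a_i)\ge\theta_{rv_i}(s,a_i)+\delta_i$ minus whatever later steps have borrowed from it. Second, all other entries of the pairwise factors, as well as all unary factors except $r$, are unchanged or have only received slack-bounded decreases. Third, no factor's minimum decreases.

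Step $i$ works as follows. Label $a_i$ at $v_i$ was deleted because every $b\in\calX_{u_i}$ falls into one of three cases:
\begin{itemize}
\item[(a)] $(b,a_i)\notin\calY_{u_iv_i}$, i.e.\ $\theta_{u_iv_i}(b,a_i)$ exceeds its minimum by more than $\varepsilon$;
\item[(b)] $b=a_j$ for some $j<i$ with $v_j=u_i$, i.e.\ $b$ was already removed;
\item[(c)] $u_i=r$ and $b\ne s$.
\end{itemize}
Case (c) is degenerate: the triplet collapses to the edge $\{r,v_i\}$, and if all $b$ at $u_i=r$ are excluded except $s$, then $(s,a_i)$ itself has slack, so we get $\delta_i$ of slack in $\thetabar_{rv_i}(s,a_i)$ without any transfer. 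Otherwise, inside the triplet $\tau=\{r,u_i,v_i\}$ and restricted to $x_r=s$ and $x_{v_i}=a_i$, we treat cases (a) and (b) separately:
\begin{itemize}
\item in case (a), we pass a message of size $\delta_i$ from the pair $\{u_i,v_i\}$ into $\tau$ on the entry $(b,a_i)$; this keeps the pair's minimum because $\delta_i$ is below the slack;
\item in case (b), we pass $\delta_i$ from the pair $\{r,u_i\}$ at entry $(s,a_j)$ into $\tau$, spending part of the credit stored there by step $j$.
\end{itemize}
After this, $\thetabar_\tau(s,b,a_i)\ge\delta_i$ for all $b$, and all other triplet entries are still $0$. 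We then pass $\delta_i$ from $\tau$ down to the pair $\{r,v_i\}$ on entry $(s,a_i)$. The triplet's minimum stays $\ge 0$, and the pair only gains. Since each $a_j$ is borrowed at most $K$ times, each time by at most $\delta_j/(2K)$, the credit on $(s,a_j)$ stays at least $\delta_j/2>0$ and that factor's minimum never drops.

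Finally, we push the credit to $r$. If the run ended with $a_K=s$ deleted via $u_K$, the same triplet step, restricted to $x_r=s$, yields a positive amount on every entry $(s,b)$ of $\{r,u_K\}$, which we pass as a message into $\thetabar_r(s)$. If instead the run emptied some $v\ne r$, then every label $a$ of $v$ has positive credit on $(s,a)$ in the pair $\{r,v\}$. Subtracting their minimum from that row and adding it to $\thetabar_r(s)$ is a valid message that keeps the pair's minimum, because only row $s$ was ever raised. The unary $r$ then gains strictly at $s$ and is unchanged elsewhere, so its minimum does not decrease.

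The main obstacle is the bookkeeping in the middle step. Borrowing in case (b) must never push an entry of $\{r,u_i\}$ below its original value in a way that lowers that factor's minimum. The same pair $\{r,u\}$ or $\{u_i,v_i\}$ may also be used by several steps and several triplets at once. I would handle this through the geometric choice of $\delta_i$ and by always tracking lower bounds of the form $\thetabar\ge\theta+(\text{positive credit})$ on row $s$, and $\thetabar\ge\theta-(\text{less than slack})$ on the entries outside $\calY$.
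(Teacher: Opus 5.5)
Your construction is the paper's. Each deletion $a_i$ is replayed in the triplet $\{r,u_i,v_i\}$: the inactive entries $(b,a_i)$ of $\{u_i,v_i\}$ and the already-credited entries $(s,a_j)$ of $\{r,u_i\}$ are lifted into the triplet, and then $(s,*,a_i)$ is pushed down to $\{r,v_i\}$. The case $u_i=r$ is covered by the slack of $(s,a_i)$, and row $s$ is finally collected into $\thetabar_r(s)$; the paper handles an emptied $v\neq r$ by appending $s$ as an artificial last deletion, which is your second terminal case. The bookkeeping differs only in form. The paper moves $2^{-i+1}\varepsilon$ at step $i$ and keeps every credited and every inactive entry at least $2^{-i+1}\varepsilon$ afterwards, so each step spends at most half of what is guaranteed. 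You count borrowings instead, and that count needs $\delta_i\le\delta_j/(2K)$ for all $j<i$, i.e.\ amounts that \emph{decrease} with $i$. Your explicit choice $\delta_i=\delta_0(2K)^{i-K}$ (``decrease geometrically in reverse order'') does the opposite: it gives $\delta_i=(2K)^{i-j}\delta_j\ge 2K\delta_j$ for $i>j$. A single case-(b) borrowing from $(s,a_j)$ at step $i$ then removes far more than the $\delta_j$ deposited there. If $(s,a_j)$ is an active entry at the minimum of $\{r,v_j\}$, that minimum drops and your third invariant fails. With $\delta_i=\delta_0(2K)^{-i}$, or the paper's $2^{-i+1}\varepsilon$, the argument goes through as you describe.

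A smaller point, which the paper's written proof also passes over: the split (a)--(c) is not exhaustive, and the ``i.e.'' in (a) is not an equivalence. A pair $(b,a_i)$ can be absent from $\calY_{u_iv_i}$ because $b$ was never in $\calY_{u_i}$ (its unary cost exceeds $\min\theta_{u_i}+\varepsilon$), even though $\theta_{u_iv_i}(b,a_i)$ is small. The same can happen to labels of an emptied node $v$. For such $b$, first move a small amount from $\thetabar_{u_i}(b)$ into $\thetabar_{ru_i}(*,b)$ along $(\{r,u_i\},\{u_i\})\in J$, using the unary slack. Then treat $b$ as in case (b).
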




In the above theorem, we saw that there exists a reparameterization that increases $\thetabar_r(s)$ by a sufficiently small positive number. The construction that we used in the proof might be wasteful. Let us discuss by how much we can improve $\thetabar_r(s)$ in the general case, if we use the same reparameterization defined by the AC run, but fine-tune the amounts for the reparameterization.

We need to define an auxiliary directed graph $\calG = (\calV, \calE)$. The vertex set of this graph is the set of all deleted labels: $\calV = \{a_i\}_{i=1}^K$. The directed edges consist of labels $(a_i, a_j)$ for $j > i$ if $a_i \in \calX_{u_j}$. Notice that $\calG$ is a directed acyclic graph, since the ordering of the labels $\{a_i\}$ in the chronological order of their deletion is also a topological ordering in $\calG$. 

Let us define another auxiliary directed graph $\calG' = (\calV', \calE')$ by applying a sequence of transformations to $\calG$. Every node of $\calG'$ is going to be a subset of labels of some vertex of the original graph $G$. Start with $\calG' = \calG$. In the beginning, all the nodes of $\calG'$ are single-element sets of labels. While possible, apply the following operation: contract two nodes $A, B$ into a single node $A \cup B$, if the following three properties hold. (1): $A$ and $B$ are subsets of the same label set $\calX_v$. (2): After the contraction, for every incoming edge $(C, A \cup B)$, we have that $C$ is a subset of the same label set $\calX_u$. (3): $\calG'$ remains acyclic.
Notice that the resulting $\calG'$ may not be unique, as it may depend on the order of the contractions. Since $\calG$ had a unique sink $a_K = s$, also $\calG'$ has a unique sink $S = \{s\}$.

Let us recursively define a function $B : \calV' \to \mathbb{N}$ called the \emph{branching factor}. $B(S) := 1$, and for any other node $A$, set $B(A) = \sum\limits_{C : (A, C) \in \calE'} B(C)$. Since $\calG'$ is acyclic by construction, this is indeed a correct recursive definition. Finally, we can state a competitive bound on how much one can increase $\thetabar_r(s)$ in the reparameterization.

\begin{theorem}
    \label{th_tightening_group_branching}
    There exists a reparameterization $\thetabar$ of $\theta$ such that $\min_{x_\alpha}\thetabar_\alpha(x_\alpha)\ge \min_{x_\alpha}\theta_\alpha(x_\alpha)$ for all $\alpha\in\calF'$, and $\thetabar_r(s) \geq \theta_r(s) + \frac{\varepsilon}{\max\limits_{A \in \calV'} B(A)}$.
\end{theorem}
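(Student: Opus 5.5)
We build $\thetabar$ by pushing "mass" backwards along $\calG'$, from its sources to the unique sink $S=\{s\}$, using only messages inside the added triplets $\{r,u_i,v_i\}$. Set $M=\max_{A\in\calV'}B(A)$ and $\delta=\varepsilon/M$. Each node $A\in\calV'$ is a set of labels of some vertex $v_A$. We aim for a reparameterization in which every label $a\in A$ of every node $A\ne S$ is raised by exactly $B(A)\delta$ in a pairwise term of the triplet used to delete it, "conditionally on $x_r=s$", while $\thetabar_r(s)$ rises by $\delta$. For a source $A$ this amount is at most $M\delta=\varepsilon$.

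The basic gadget is one deletion step $i$ with triplet $\tau_i=\{r,u_i,v_i\}$. The label $a_i$ was deleted because every $(b,a_i)$ with $b\in R_{u_i}$ has $\theta_{u_iv_i}(b,a_i)>\min\theta_{u_iv_i}+\varepsilon$. Pairs $(b,a_i)$ with $b$ already deleted are covered by earlier nodes, namely the in-neighbours $C$ of the node containing $a_i$. We use messages on $\tau_i$ to transfer cost as follows.
\begin{itemize}
\item Subtract $\eta$ from $\theta_{u_iv_i}(b,a_i)$ for every non-deleted $b$. This is allowed because the slack exceeds $\varepsilon\ge\eta$, so $\min\theta_{u_iv_i}$ does not drop.
\item Move this $\eta$ into $\tau_i$, then out to the pair $\{r,v_i\}$ at $(s,a_i)$, where it adds $+\eta$.
\item For $b$ already deleted (so $b\in C$ for an in-neighbour $C$), charge $\eta$ to the pair $\{r,u_i\}$ at $(s,b)$ as a $-\eta$ term.
\end{itemize}
Property (2) of the contraction ensures that all such $b$ lie in a single vertex $u$, so the charge lands in one pairwise table $\{r,u\}$. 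The triplet term $\theta_{\tau_i}$ stays $\ge 0$, because on each $(s,b,a_i)$ the $+$ and $-$ contributions balance, and elsewhere only nonnegative terms appear.

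Now choose the amounts. For a node $A$ at vertex $v$, set $\eta_A=B(A)\delta$ and add $+\eta_A$ to $\theta_{rv}(s,a)$ for each $a\in A$. An out-edge $(A,C)$ with $C\subseteq\calX_{w}$ corresponds to deletions of labels in $C$ that rely on $a\in A$ being already gone; such a deletion charges $-\eta_C$ to $\theta_{rv}(s,a)$. The total charge on $(s,a)$ is therefore $\sum_{C:(A,C)\in\calE'}\eta_C=\delta\sum_C B(C)=\eta_A$, which exactly cancels the gain. Since all labels in a contracted node get the same amount, a single $\eta_A$ works for every label in $A$.

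Finally, at the sink $S=\{s\}$ (vertex $r$), the $+\eta_S=\delta$ sits on the pair $\{r,v_K\}$ at $(s,s)$, which is really a unary term. We move it by a message down to $\thetabar_r(s)$. In the artificially extended case, where the last removed label was the final label of some $v\ne r$, we instead apply the usual argument: every label of $v$ receives positive mass at $x_r=s$. We then take the min over $\calX_v$ of $\theta_{rv}(s,\cdot)$ and push it to $\theta_r(s)$, with the final step treated as "deleting $s$" because of node $v$'s emptiness. All pairwise minima are preserved, since we only add on entries of the form $(s,\cdot)$ and subtract on entries that had slack $>\varepsilon\ge B(A)\delta$, or that were compensated exactly. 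This gives $\thetabar_r(s)\ge\theta_r(s)+\delta$.

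The main obstacle is making the flow accounting precise once nodes are contracted. Three things need checking: that the contraction conditions (1)–(3) really yield a single consistent table per edge; that acyclicity justifies processing in topological order; and that the slack condition $\eta_A\le\varepsilon$ is satisfied, which holds because $B(A)\le M$. I would first verify the uncontracted case $\calG'=\calG$ and then argue that merging preserves each step.
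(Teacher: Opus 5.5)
\textbf{There is a gap in the contracted case, which is the whole content of the theorem.} Your ingredients are the paper's: the three moves on the triplet $\{r,u,v\}$, the amounts $\eta_A=B(A)\,\varepsilon/\max_{\calV'}B$, the conservation $\eta_A=\sum_{C:(A,C)\in\calE'}\eta_C$, and a final push onto $\thetabar_r(s)$. But the step that makes contraction work is missing, and as written your gadget fails.

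Your gadget runs once per deletion step. At the deletion of each $c$ it subtracts $\eta$ from $\thetabar_{ru}(s,b)$ for every already-deleted $b\in\calX_u$. Suppose a contracted node $C\subseteq\calX_w$ contains labels $c_1,\dots,c_m$, all deleted because of $u$ after some $b\in A$. Then $b$ is charged $m\eta_C$ for the single out-edge $(A,C)$. So the total $\sum_{C:(A,C)\in\calE'}\eta_C$ you claim in your ``choose the amounts'' paragraph is false.

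This already breaks the setting of Corollary~\ref{cor_cycle}. There all $B=1$, so every amount is $\varepsilon$. Take a label $b$ of a cycle vertex $u$ not adjacent to $r$, so $\{r,u\}$ is a new zero factor, and let $m\ge 2$. Then $\thetabar_{ru}(s,b)$ ends at $\varepsilon-m\varepsilon<0$. For $\calG'=\calG$ your accounting is right, but that only yields the weaker bound $\varepsilon/\max_{\calV}B$. Merging does not ``preserve each step'': the charging itself has to change.

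\textbf{The missing observation.} By contraction property (2), every label of $C$ that has in-edges was deleted because of the same vertex $u$. So all these labels use one triplet $\tau=\{r,u,w\}$. A single message from $\tau$ to $\{r,u\}$ at $(s,b)$ of size $\eta_C$ raises $\thetabar_\tau(s,b,c)$ for every $c\in\calX_w$ at once. It therefore compensates your third move for all $c\in C$ simultaneously. Perform it once per pair (node $C$, label $b$ in an in-neighbour of $C$); this is exactly the paper's second bracket for the edge $(A,C)$. Then each $(s,a)$ pays at most $\sum_{C:(A,C)\in\calE'}\eta_C=\eta_A$, and the rest of your verification goes through.

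Keeping the other two moves per label, each with that label's own triplet $\{r,u_i,v_i\}$, is fine. Restricting the first move to $b$ not yet deleted is also fine. In fact this restriction is what guarantees that each inactive pair $(b,c)$ is decreased at most once: it can only be touched at the deletion of whichever of $b,c$ goes first. That makes your first move slightly more careful than the paper's first bracket, which ranges over all inactive pairs.
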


As a corollary, we can conclude that in the practically common case when the AC deletes labels around a cycle, we can increase $\thetabar_r(s)$ by $\varepsilon$.
\begin{corollary}
    \label{cor_cycle}
    Let $w_1, \ldots, w_d$ be a cycle in $G$, where $w_1 = r$. If for all $i$, we have $(u_i, v_i) = (w_j, w_{j+1})$ (i.e. the AC deletes labels along this cycle), then there exists a reparameterization $\thetabar$ of $\theta$ such that $\min_{x_\alpha}\thetabar_\alpha(x_\alpha)\ge \min_{x_\alpha}\theta_\alpha(x_\alpha)$ for all $\alpha\in\calF'$, and $\thetabar_r(s) \geq \theta_r(s) + \varepsilon$.
\end{corollary}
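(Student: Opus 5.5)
The plan is to derive Corollary~\ref{cor_cycle} from Theorem~\ref{th_tightening_group_branching}. It suffices to exhibit one admissible sequence of contractions for which the resulting graph $\calG'$ satisfies $\max_{A\in\calV'}B(A)=1$. The hypothesis says that every deletion step uses an edge of the cycle in its forward orientation: the deletion of $a_i\in\calX_{v_i}$ is caused by $u_i$, with $(u_i,v_i)=(w_j,w_{j+1})$ for some $j$. Indices are taken cyclically, so $v_i=w_{j+1}$ and the last step enters $w_1=r$. The idea is to contract, for each cycle vertex $w_j$, all deleted labels of $w_j$ into one node $A_j\subseteq\calX_{w_j}$. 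We then check that $\calG'$ becomes a simple directed path $A_2\to A_3\to\cdots\to A_d\to S$, possibly with extra nodes at $r$ that need separate care. On a path every node has out-degree one, so the recursion gives $B(A)=1$ for every node, and Theorem~\ref{th_tightening_group_branching} yields $\thetabar_r(s)\ge\theta_r(s)+\varepsilon$.

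The key steps, in order, are as follows.

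First, describe the edges of $\calG$. An edge $(a_i,a_j)$ with $j>i$ exists only if $a_i\in\calX_{u_j}$. Since $u_j=w_{k}$ whenever $v_j=w_{k+1}$, every edge of $\calG$ goes from a label of $w_k$ to a label of $w_{k+1}$. Hence, grouping nodes by their vertex $w_k$, all edges point "forward" along the cycle.

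Second, perform the contractions vertex by vertex and verify the three conditions. Condition (1) holds trivially. Condition (2) holds because every incoming edge of a contracted node at $w_{k+1}$ comes from labels of the single vertex $w_k$, and contracting the predecessors first keeps them as subsets of $\calX_{w_k}$.

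Third, verify condition (3), acyclicity. Here I would argue via the chronology of deletions: the labels of $w_2,\dots,w_d$ and then $s$ are removed in cycle order. The cycle is simple and the deletion pattern follows it, so the AC run propagates from $r$ around the cycle back to $r$. Contracting per vertex $w_k$ for $k\ne 1$ therefore gives the quotient DAG $w_2\to\cdots\to w_d\to r$.

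The main obstacle is vertex $r=w_1$. Its deleted labels other than $s$ could sit on the path both at the start (labels of $r$ removed because $s$ was fixed) and at the end. Contracting them with $S$ would create a cycle. I would handle this in two ways. First, the only labels of $r$ that feed $w_2$ are those removed at the start, by fixing $r$ to $s$; these form a single contracted source node $A_1$. Second, labels of $r$ deleted later, because of $w_d$, can be contracted into $S$ when that preserves acyclicity. Any that cannot be contracted still have out-degree at most one, so $B=1$ for them as well.

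In fact the cleanest check is simpler than verifying the path shape. Every node's out-edges go into the unique contracted node of the next cycle vertex, so each out-degree is at most one. Induction backward from $S$ then gives $B\equiv 1$, which completes the argument.
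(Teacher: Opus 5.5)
Your proposal is correct and follows the paper's proof. Contract all deleted labels of each cycle vertex $w_j$ into one node. Since every edge of $\calG$ goes from $\calX_{w_j}$ to $\calX_{w_{j+1}}$, the result is a path with $B\equiv 1$, and Theorem~\ref{th_tightening_group_branching} gives the bound $\varepsilon$.

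Your special treatment of $r$ is unnecessary. After fixing $r$ to $s$ we have $\calY_r=\{s\}$, so the only node of $\calG$ lying in $\calX_r$ is $s=a_K$. It is a sink because edges of $\calG$ point forward in deletion time. Labels removed by the fixing step are not nodes of $\calG$ at all. Acyclicity follows from this sink property alone, without assuming that deletions happen in cycle order (they need not).
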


\section{SAC tightening algorithm}
\label{sec_algorithm_description}
Results in the previous section suggest a natural algorithm for tightening. First, we will describe a naive version, and then discuss practical improvements.
Below ``AC'' means the arc consistency algorithm described in Section~\ref{sec:twoCSP}.

\begin{algorithm}[H]
  \DontPrintSemicolon
\SetNoFillComment
choose parameter $\varepsilon>0$, define CSP instance
$\calI=(V,\calX,\calF,\calY)=CSP_\varepsilon^{\le 2}(\theta)$. 
\\
update $\calI$ by running AC.
\\
\For{$r\in V$, $s\in \calY_r$}
{
fix node $r$ to $s$, run AC in $\calI$. If this causes inconsistency 
then find a minimal set of operations that cause inconsistency and construct a set of triplets $\calA_{rs}=\{\alpha_1,\ldots,\alpha_k\}$ as described in Section~\ref{sec_theory}; otherwise set $\calA_{rs}=\varnothing$.
Restore instance $\calI$.
}
for $r\in V$ define $\calL_r=\{s\in \calY_r\::\calA_{rs}\ne\varnothing\}$.
Define total order $\preceq$ on $V$ as follows: 
(1) if $\calL_r=\calY_r$ and $\calL_{r'}\ne \calY_{r'}$
then $r\prec r'$;
(2) otherwise if $|\calA_r|<|\calA_{r'}|$ then $r\prec r'$.
\\ set $\calA\leftarrow\varnothing$, then go through $r\in V$ in the order $\preceq$ and add $\calA_r$ to $\calA$ if $\calA\cap\calA_r=\varnothing$.~\footnote{
Here we actually assume that $\calA_r$  also contains ``triplets'' of the form $\{r,r,v\}$, if edge $\{r,v\}$ was touched during AC.
When defining the order $\preceq$, we ignore such ``triplets'' when counting $|\calA_r|$.
}
 \\
return $\calA$
      \caption{{\sc FindTriplets}.
      }\label{alg:SAC}
\end{algorithm}
By results in the previous section,
after adding $\calA_r=\bigcup_{s\in \calY_r}\calA_{rs}$
there exists a reparameterization $\thetabar$ such that $\min_{x_\alpha}\thetabar_\alpha(x_\alpha)\ge \min_{x_\alpha}\theta_\alpha(x_\alpha)$
for all $\alpha\in\calF$ and $\thetabar_r(a) > \theta_r(a)$ for all $a\in\calL_r$.
In particular, if $\calL_r=\calY_r$ then
 $\thetabar$ gives a better lower bound:  $\Phi(\thetabar)>\Phi(\theta)$.
This observation motivates the sorting rule (1) in line 5: it prioritizes sets of triplets which are guaranteed to increase the bound.
Note, we prune some triplets to avoid a scenario when we add triplets supported by the same cycle.

Next, we discuss some implementational choices.

\myparagraph{AC algorithm}
We used the AC3 algorithm~\citep{Mackworth1977} whose worst-case complexity is $O(|E|\cdot(\max_v|\calX_v|)^3)$. Although this is worse than the $O(|E|\cdot(\max_v|\calX_v|)^2)$ complexity of the AC2001/3.1 algorithm~\citep{bessiere2005optimal}, AC3 is simpler and may have an advantage when the number of labels is small. It maintains a queue of directed edges $(u,v)$, and processes them in certain order. To simulate a BFS-like search, we use the FIFO order for this queue. In our experiments, this tends to produce smaller inconsistent subproblems.
Importantly, we limit the search to depth $d_{\max}=3$ to reduce the complexity (when running AC in line 4).
This means that the algorithm is able to find frustrated cycles of length $2d_{\max}=6$.


\myparagraph{Choice of $\varepsilon$}
We maintain the current value of $\varepsilon$ (initially $\varepsilon=0.1$). At each tightening stage we run Algorithm~\ref{alg:SAC} for values $\varepsilon_i=2^{-i}\varepsilon$ ($i=0,1,2,\ldots$) and compute corresponding sets $\calA^{(i)}$, where for the $i$-th run set $\calA$ at line 6 is initialized via $\calA\leftarrow\calA^{(i-1)}$.
For each $i$ we then try to add computed triplets to $\calF$, and compute the number $k_i$ of those triplets that were not yet  present in $\calF$. If $k_i< 2 k_{i-1}$ then we stop and update $\varepsilon=\varepsilon_{i-1}$.
If $\varepsilon$ becomes smaller than $10^{-6}$ then we increase $d_{\max}$ by 1 and reset $\varepsilon=0.1$.

\section{Frustrated cycles algorithm}\label{sec:frustrated}
In this section we specify details of our implementation of the algorithm of~\cite{sontag_frustrated_cycles} described in Section~\ref{sec:FR:intro}. 
The first question is how to set partitions $\Pi_v$ for $v\in V$.
To match the SAC tightening algorithm, we compute $\calY_v=\{a\in\calX_v\::\:\theta_v(a)\le\min_b \theta_v(b)+\varepsilon\}$ and then for each $a\in\calY_v$ add the partition $(\{a\},\calX_v-\{a\})$ to $\Pi_v$.

Next, we need to compute frustrated cycles, i.e.\ cycles in the signed graph $\calG$ which have an odd number of negative edges.
To do this, \cite{sontag_frustrated_cycles} first pick a spanning rooted tree $T$ in $\calG$ (namely, a BFS tree starting from a random node).
They observe that for each edge $(u,v)$ in $\calG$ one can check in $O(1)$ time whether the fundamental cycle passing though $(u,v)$ has an odd number of negative edges. They then return up to 5 such cycles sorted by their length.

We experimented with two different modifications:
\begin{enumerate}
\item[{\tt FR}1:] return {\bf all} fundamental cycles that are frustrated (in a single spanning tree).
\item[{\tt FR}:~\!~] construct a (not necessarily spanning) BFS tree of depth at most $d_{\max}=3$ from each vertex of $\calG$, return all frustrated fundamental cycles in these trees. Such algorithm is able to find cycles of length $2d_{\max}+1=7$.

Note that both {\tt FR} and {\tt SAC} do a BFS-like search from  $(r,s)$ for every $r\in V,s\in\calY_r$.
In general, the complexity of {\tt FR} increases with $d_{\max}$, while in the case of {\tt SAC} this is not necessarily the case since the AC algorithm may stabilize at some point.
\end{enumerate}

Our overall implementation uses the same structure as in the previous section. The only difference is that in lines 3-5 of Algorithm~\ref{alg:SAC} instead of sets $\calA_r$ we use the cycles computed as above, triangulated into triplets; in line 5 they are sorted according to their size.

In our tests the 5-cycle version from~\citep{sontag_frustrated_cycles} was always worse than ${\tt FR}1$ and ${\tt FR}$ (usually substantially). For this reason we will report the results of only the latter two algorithms.

We conclude this section with the following observation; its proof is given in the appendix.
\begin{theorem}\label{th:FR}
If $\calG$ has a frustrated cycle then the SAC tightening (Algorithm~\ref{alg:SAC}) will find at least one node $r$ with $\calL_r=\calY_r$, and hence will output triplets that will yield a strict increase of the lower bound.\footnote{This theorem is stated for the unbounded-depth versions of {\tt SAC} and {\tt FR} tightenings. It also holds if the search depth bound for {\tt SAC} allows it to find cycles of the same length as {\tt FR}. However, in our implementation, the search depth is increased if we do not find an obstruction, so the unbounded-depth statement of this theorem is sufficient for our purposes.}
\end{theorem}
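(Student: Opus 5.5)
The approach is to turn the frustrated cycle into a chain of forced AC deletions in $\calI=CSP_\varepsilon^{\le 2}(\theta)$. I pick a vertex $r$ of $G$ that underlies a node $\pi_r=(\{s_0\},\calX_r-\{s_0\})$ of the frustrated cycle $\calC$. I then show that fixing $r$ to \emph{any} label of $\calY_r$ and running AC empties the instance. This gives $\calL_r=\calY_r$, and the strict increase of the bound follows from Theorem~\ref{th_strict_tightening} together with the remark after Algorithm~\ref{alg:SAC}.

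\textbf{Step 1: encode the edges as implications.} For a partition node $\pi_v=(\{a\},\calX_v-\{a\})$ I use the Boolean literal $X_{v,a}=\llbracket x_v=a\rrbracket$. Fix an edge $\{\pi_u,\pi_v\}\in\calE$ with $\pi_u^+=\{a\}$ and $\pi_v^+=\{b\}$. Then $\min_{(a',b')}\theta_{uv}(a',b')$ is the smaller of the "agreeing" and "disagreeing" minima.
\begin{itemize}
\item If the edge is \emph{positive}, every disagreeing pair exceeds the minimum by more than $\varepsilon$. Hence $(a,b')\notin\calY_{uv}$ for $b'\neq b$, and $(a',b)\notin\calY_{uv}$ for $a'\ne a$. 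In CSP terms this is $X_{u,a}\Leftrightarrow X_{v,b}$.
\item If the edge is \emph{negative}, every agreeing pair is excluded: $(a,b)\notin\calY_{uv}$, and $(a',b')\notin\calY_{uv}$ whenever $a'\ne a$ and $b'\ne b$. This is $X_{u,a}\Leftrightarrow\neg X_{v,b}$.
\end{itemize}

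\textbf{Step 2: these implications are realized by AC deletions.} I check each literal transition against the two AC rules.
\begin{itemize}
\item Suppose the domain of $u$ is $\{a\}$ and the edge is positive. Every $b'\neq b$ then has no support, so AC reduces the domain of $v$ to (at most) $\{b\}$.
\item Suppose $a$ has been removed from $u$ and the edge is positive. The only possible support of $b$ is $a$, so $b$ is deleted.
\item The negative cases are symmetric. If $u$ has domain $\{a\}$, then $b$ is deleted. If $a$ has been removed, every $b'\ne b$ loses all its supports, since those could only be $a$; so $v$ is reduced to $\{b\}$.
\end{itemize}
Thus AC propagates the truth value of the literals along the cycle. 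Pairs and labels already removed by the initial AC in line~2 only make the domains smaller, so the propagation still goes through (or the instance empties even earlier). If the initial AC already empties the instance, the claim is trivial in the intended sense, since the bound then increases without tightening.

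\textbf{Step 3: walk around the cycle.} First fix $r$ to $s_0$, so $X_{r,s_0}$ is true. Traverse $\calC$ once, applying Step 2 at each edge. Because the number of negative edges is odd, we return to $\pi_r$ with $X_{r,s_0}$ forced false, i.e.\ $s_0$ is deleted. This empties the domain of $r$.

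Next fix $r$ to some $s'\in\calY_r$ with $s'\ne s_0$. Then $s_0$ is absent, so $X_{r,s_0}$ is false. The same traversal forces $X_{r,s_0}$ true, which removes every label other than $s_0$, including $s'$. The domain is again empty.

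Hence $\calA_{rs}\neq\varnothing$ for every $s\in\calY_r$, which gives $\calL_r=\calY_r$.

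\textbf{Main obstacle.} The delicate point is that $\calC$ may visit the same vertex $v$ of $G$ through several different partitions, and that intermediate domains during AC need not be singletons. I will handle this by phrasing the invariant purely in terms of the literals, "label $b$ is deleted" or "all labels except $b$ are deleted". These statements are monotone under further AC deletions, so the order in which AC3 processes edges, and repeated visits to a vertex, do not matter.

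Finally, the depth-unbounded setting assumed in the theorem's footnote guarantees that AC is allowed to traverse the whole cycle.
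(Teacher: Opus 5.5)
Your proof is correct and matches the paper's argument. Your two literal states ``$b$ deleted'' and ``all labels except $b$ deleted'' are exactly the paper's invariant $\calY_{v_i}\subseteq\pi_i^{\sigma_i}$ in the singleton-partition case, and propagating that invariant around the frustrated cycle empties $\calY_r$ when $r$ is fixed to any $s\in\calY_r$, whichever side of $\pi_1$ it lies on. The one step you defer is the final ``hence'': Theorem~\ref{th_strict_tightening} alone handles one label at a time, and the paper obtains a single bound-improving reparameterization by averaging the per-label reparameterizations $\lambda^s$, $s\in\calY_r$, with strictly positive weights. This works because the constraints are linear in $\lambda$, and each $\lambda^s$ changes only $\thetabar_r(s)$ among the unary costs.
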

Note that the converse does not necessarily hold: it may happen that $\calG$ does not have frustrated cycles but  {\tt SAC} still returns some triplets. For example, {\tt SAC}
is able to find triplets that increase costs $\theta_r(a)$ of some labels $a\in\calX_r$ but not all of them. This could correspond to cutting off some fractional vertices of the local polytope without necessarily improving the lower bound. In contrast, the frustrated cycle procedure is only able to find  tightenings that strictly improve the bound.

\section{Experiments}
\label{sec_experiments}

We compared the lower bounds computed by the following algorithms.
\begin{itemize}
\item  {\tt SAC}, {\tt FR}1 and {\tt FR}, as described in previous sections. We built our implementation as an extension of the SRMP software (\url{https://pub.ista.ac.at/~vnk/software.html#SRMP}, GPL License); the code is attached as a supplementary material. We start the process with 100 iterations of SRMP and then repeat the following procedure: find triplets using the current reparameterization, add them to the relaxation and run 100 iterations of SRMP.
We stop the process if no new triplets were found at a given iteration, or if the time limit of 300 seconds was reached.
\item SRMP algorithm without any tightening. We stop when it converged, or when it runs for more than 300 seconds. We show only the last point as a green dot.
\item toulbar2 (\url{https://github.com/toulbar2/toulbar2}, version v1.2.1, MIT License). It was the winner of the UAI 2022 competition on the task of computing the best primal solution for MAP-MRF problems. We run three versions of toulbar2 - vns, ipr and vacint, and take the best. 

\end{itemize}

All codes were written in C++. We ran it on a laptop running Ubuntu 22.04.2 LTS (Linux kernel 5.19.0-50-generic), equipped with an Intel Core i5-1335U CPU (10 cores, 12 threads, up to 4.6 GHz), 12 MiB L3 cache, and 15 GB RAM.

For the comparison, we used the benchmark of examples from the UAI 2022 competition\footnote{https://uaicompetition.github.io/uci-2022/} which includes MAP-MRF instances (called ``MPE''). It has 17 classes of problems with unary and pairwise terms.
We selected the first instance from each
class, and excluded 3 instances on which the methods converged immediately to an optimal solution (without any need for tightening). The plots for the remaining 14 instances are shown below, and tables with numerical results are given in the appendix.

The title of each plot gives the name of the file
and the size of the instance in the form $(n, m, \#\ell)$ where $n$ is the number of nodes, $m$ is the number of edges, and $\ell$ is the number of labels which can be a range in the case of varying number of labels. If a trend is interrupted early, this means that the algorithm terminated, as no further progress could be made.

\centering
\includegraphics[width=0.45\textwidth]{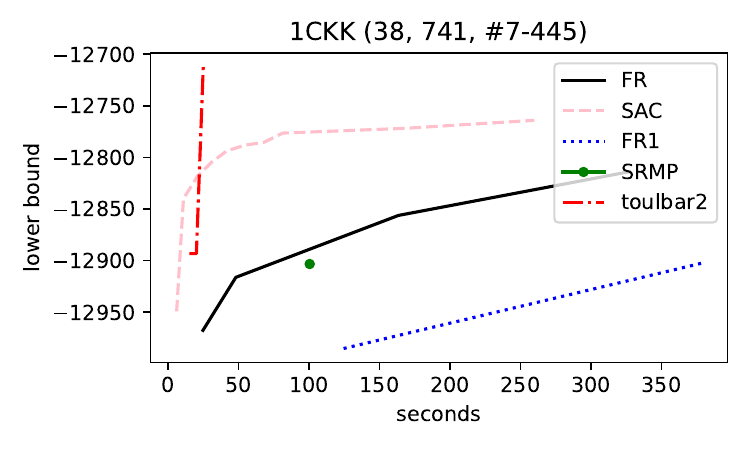}
\hspace{0.05\textwidth}
\includegraphics[width=0.45\textwidth]{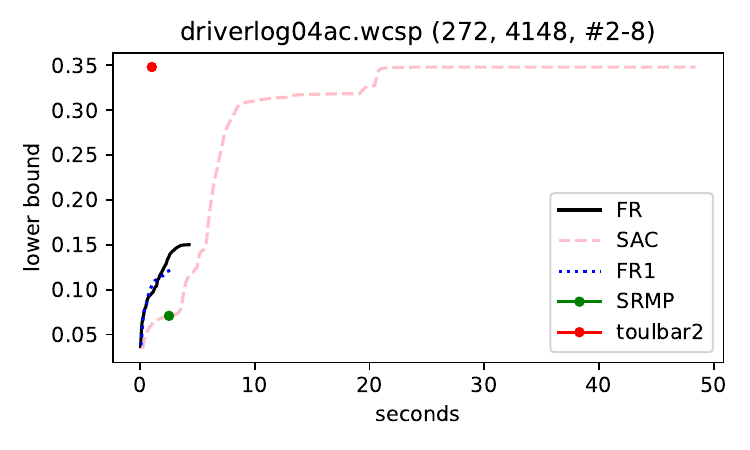}
\hspace{0.05\textwidth}
\includegraphics[width=0.45\textwidth]{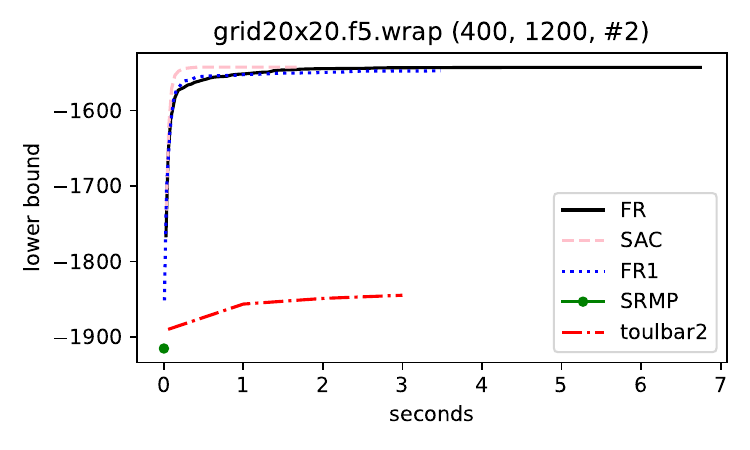}
\vspace{0.5cm}
\includegraphics[width=0.45\textwidth]{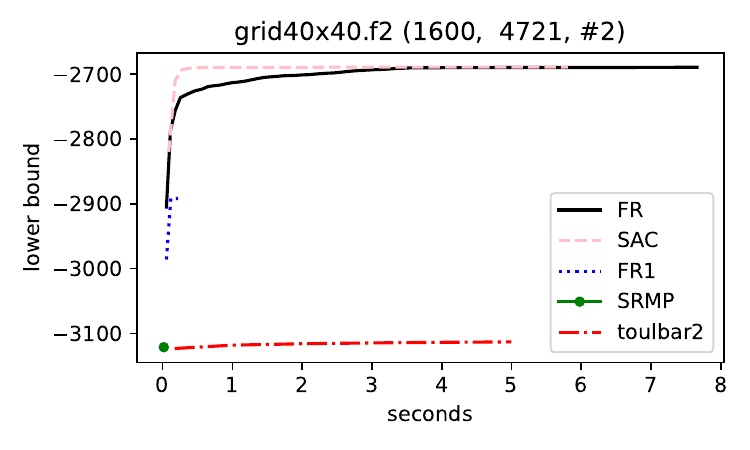}
\hspace{0.05\textwidth}
\includegraphics[width=0.45\textwidth]{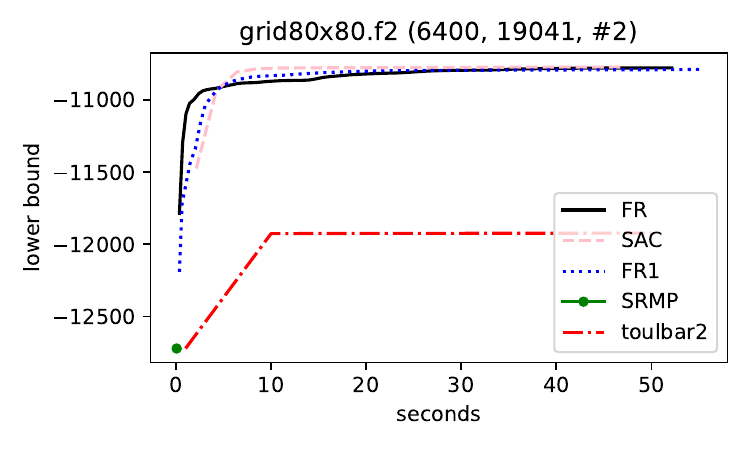}
\vspace{0.5cm}
\includegraphics[width=0.45\textwidth]{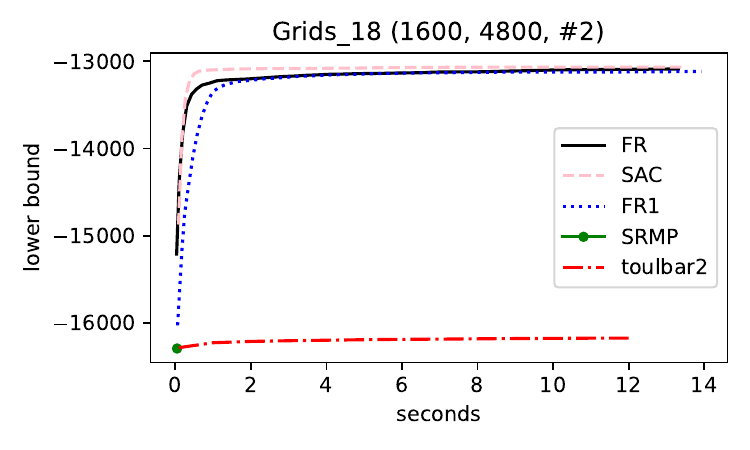}
\hspace{0.05\textwidth}
\includegraphics[width=0.45\textwidth]{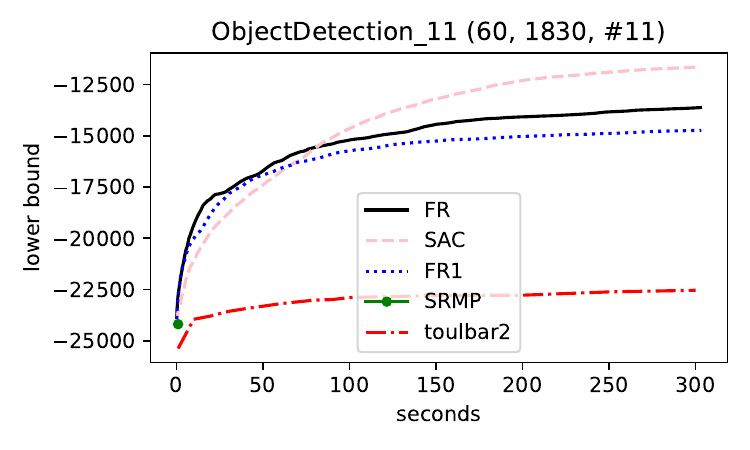}
\vspace{0.5cm}
\includegraphics[width=0.45\textwidth]{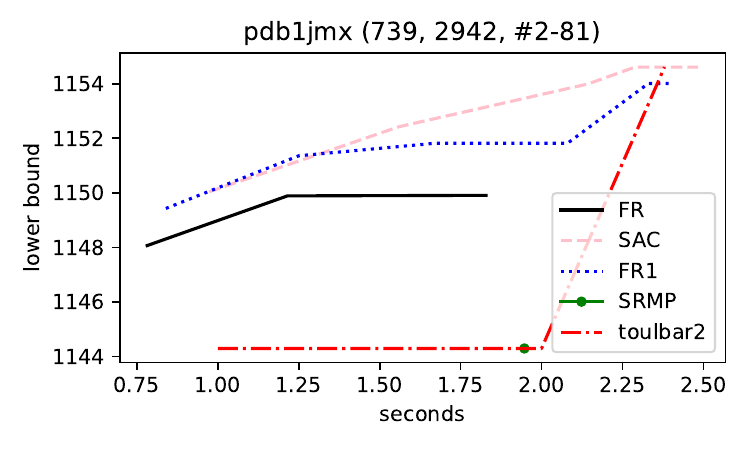}
\hspace{0.05\textwidth}
\includegraphics[width=0.45\textwidth]{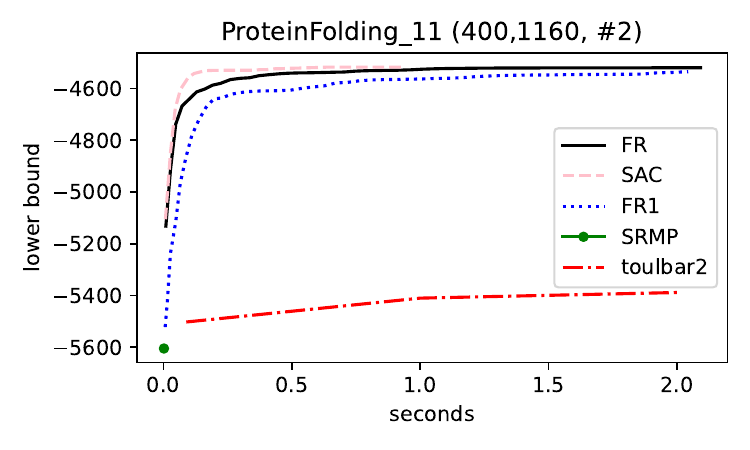}
\vspace{0.5cm}
\includegraphics[width=0.45\textwidth]{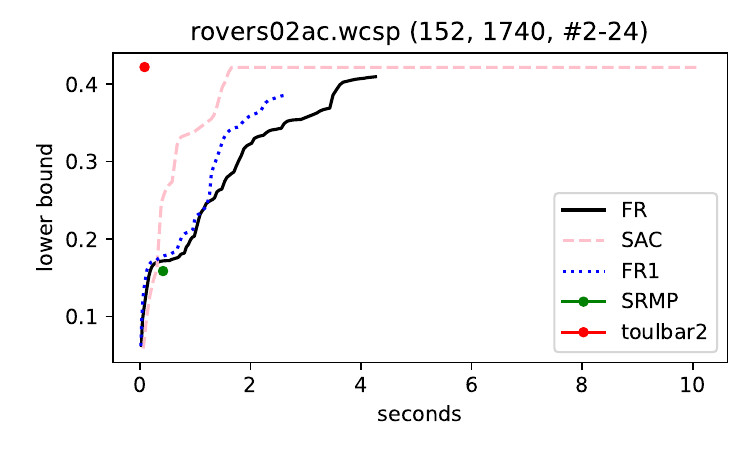}
\hspace{0.05\textwidth}
\includegraphics[width=0.45\textwidth]{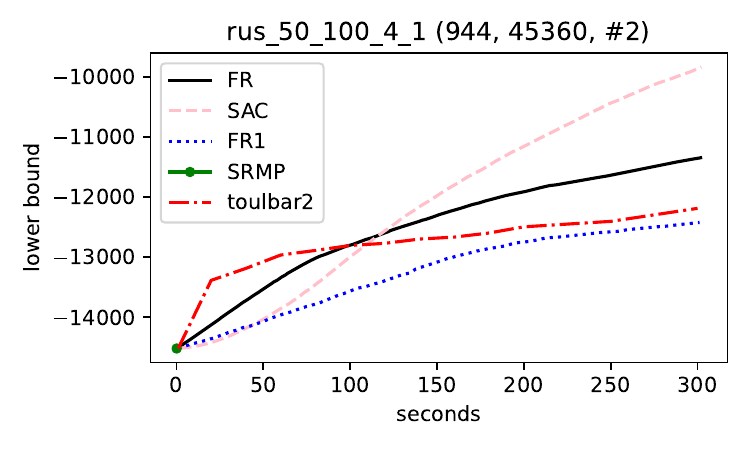}
\vspace{0.5cm}
\includegraphics[width=0.45\textwidth]{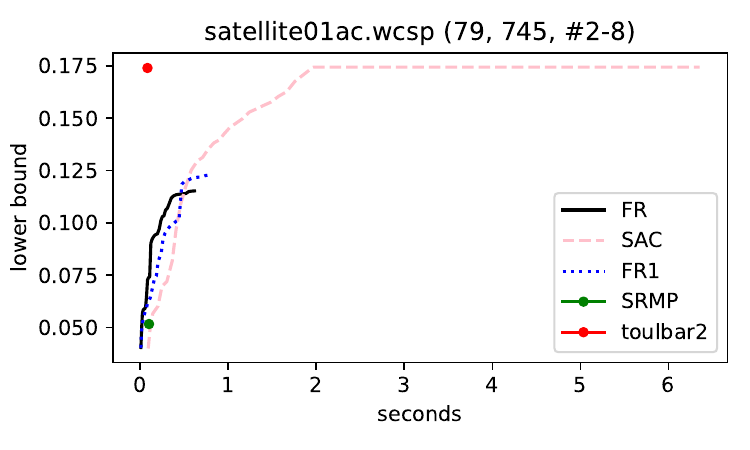}
\hspace{0.05\textwidth}
\includegraphics[width=0.45\textwidth]{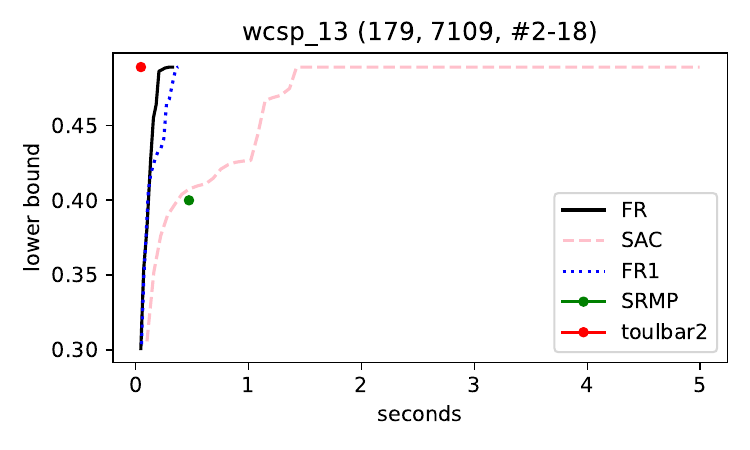}
\vspace{0.5cm}
\includegraphics[width=0.45\textwidth]{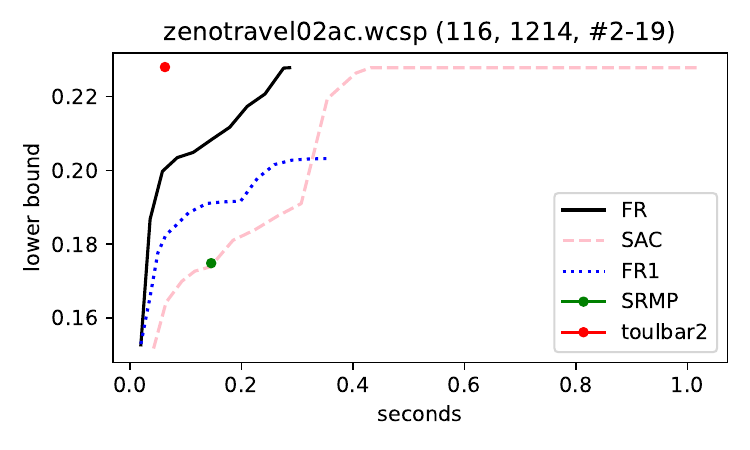}

\justifying
\myparagraph{Discussion of results} 
From the plots one can see that the highest lower bound attained by {\tt SAC} is the same or better than that of {\tt FR}1 / {\tt FR}, and in half of the cases it is strictly better.
This confirms the discussion in Section~\ref{sec:frustrated} that {\tt SAC} is capable of finding more useful triplets than {\tt FR}1 / {\tt FR}.

Let us now discuss the performance of {\tt SAC} vs toulbar2. The former finds
a much better bound in 7 cases, a worse bound in 1 case (1CKK), and similar bounds in 6 cases. In these 6 cases toulbar2 reaches the best bound faster than {\tt SAC} in all cases except for one.
We remark toulbar2 uses a very different strategy; according to the documentation, its lower bounds are based on a combination of various versions of Virtual Arc Consistency (VAC), dead end elimination and branch-and-bound search. As the plots show, in many cases this strategy is less powerful than the cluster-based tightening. 

Overall, we conclude that {\tt SAC} is the most robust tightening technique among considered approaches. As noted above, {\tt SAC} cannot make progress if the current CSP instance is singleton-arc-consistent, which is the limitation of {\tt SAC}. For the {\tt FR} technique, a similar limitation is an absence of frustrated cycles. Theorem~\ref{th:FR} shows that {\tt SAC}'s limitation is weaker than {\tt FR}'s limitation in the sense that we can make progress whenever {\tt FR} can. 

\section*{Acknowledgements}
We thank Ivan Sergeev for useful discussions.

\bibliographystyle{alpha}
\bibliography{references}

\appendix
\section{Proofs from Section~\ref{sec_theory}}

\begin{proof}[Proof of Theorem~\ref{th_strict_tightening}]
    Without loss of generality, assume that $\min_{x_\alpha} \theta_{x_\alpha}(x_\alpha) = 0$ for all $\alpha \in \calF'$. Now, to prove the theorem, it is enough to find a reparameterization satisfying $\min_{x_\alpha}\thetabar_\alpha(x_\alpha) \geq 0$ for all $\alpha\in\calF'$, and $\thetabar_r(s)>\theta_r(s)$. 

    Let us call a pair of labels $(a, b)\in \calX_u \times \calX_v$ \emph{active} if the corresponding cost $\theta_{uv}(a,b) \leq \varepsilon$ and \emph{inactive} otherwise. Thus, for every inactive pair $(a,b)\in \calX_u \times \calX_v$ we have $\theta_{uv}(a, b) > \varepsilon$. Equivalently, $(a, b)$ is active if and only if $(a, b) \in \calY_{uv}$ in the beginning of the AC run. The active/inactive status for a pair remains unchanged throughout the proof.

    We are going to follow the sequence of label deletions from the AC run and simultaneously edit the costs $\thetabar$ to construct a reparameterization. We start with $\thetabar = \theta$. We will show that after the deletion of $a_i$, $i < K$, the following properties hold.
    \begin{enumerate}
        \item For any factor $\alpha \in \calF'$, we have $\min_{x_\alpha}\thetabar_\alpha(x_\alpha) \geq 0$.
        \item For any deleted label $a_j \in v_j$, $j \leq i$, we have $\thetabar_{r v_j}(s, a_j) \geq 2^{-i+1} \varepsilon$.
        \item For any inactive pair $(a, b) \in \calX_u \times \calX_v$, we have $\thetabar_{uv}(a, b) \geq 2^{-i+1} \varepsilon$.
    \end{enumerate}

    We will use induction on $i$. Base case: consider the deletion of $a_1 \in \calX_{v_1}$. The deletion happened, since all active pairs $(b, a_1) \in \calX_{r} \times \calX_{v_1}$ have $b \neq s$. Thus, property (2) holds. Properties (1) and (3) hold trivially.

    Step case: consider the deletion of $a_i \in \calX_{v_i}$. If $u_i = r$, then proceed as in the base case. Otherwise, $u_i \neq r$. For all active pairs $(b, a_i) \in \calX_{u_i} \times \calX_{v_i}$, $b$ was deleted at some step in the past. Therefore, by property (2), we have $\thetabar_{r u_i}(s, b) \geq 2^{-i+2} \varepsilon$. For all inactive pairs $(c, a_i) \in \calX_{u_i} \times \calX_{v_i}$ we have $\thetabar_{r u_i}(s, c) \geq 2^{-i+2} \varepsilon$ by property (3). Perform the following changes:
    \begin{equation}
    \begin{aligned}
        &\begin{cases}
            \thetabar_{u_i, v_i}(c, a_i) \leftarrow \thetabar_{u_i, v_i}(c, a_i) - 2^{-i+1} \varepsilon \\
            \thetabar_{r u_i v_i}(*, c, a_i) \leftarrow \thetabar_{r u_i v_i}(*, c, a_i) + 2^{-i+1} \varepsilon
        \end{cases}
        \quad &&\text{for } (c, a_i) \text{ inactive} \\ 
        &\begin{cases}
            \thetabar_{r, u_i}(s, b) \leftarrow \thetabar_{r, u_i}(s, b) - 2^{-i+1} \varepsilon \\
            \thetabar_{r u_i v_i}(s, b, *) \leftarrow \thetabar_{r u_i v_i}(s, b, *) + 2^{-i+1} \varepsilon
        \end{cases}
        \quad &&\text{for } (b, a_i) \text{ active} \\ 
        &\begin{cases}
            \thetabar_{r, v_i}(s, a_i) \leftarrow \thetabar_{r, v_i}(s, a_i) + 2^{-i+1} \varepsilon \\
            \thetabar_{r u_i v_i}(s, *, a_i) \leftarrow \thetabar_{r u_i v_i}(s, *, a_i) - 2^{-i+1} \varepsilon
        \end{cases}
    \end{aligned}
    \label{eq_reparameterization_proof}
    \end{equation}
    Every bracket corresponds to a single reparameterization move. The symbol $*$ means ``for every label''.

    Now let us verify that the invariants still hold. 
    \begin{enumerate}
        \item Let us check the first property. We never change $\thetabar_v(a)$, so property (1) holds for $|\alpha| = 1$.

        Consider the case $|\alpha = 2|$. For the edge $(u_i, v_i)$, we have decreased the values of $\thetabar_{u_i, v_i}(c, a_i)$ by $2^{-i+1} \varepsilon$ for $(c, a_i)$ inactive. By property (3), these values used to be at least $2^{-i+2} \varepsilon$. Thus, $\min\limits_{(a, b) \in \calX_u \times \calX_v} \thetabar_{uv}(a, b) \geq 0$.

        We have also decreased $\thetabar_{r, u_i}(s, b)$ for $(b, a_i)$ active. Since $b$ was deleted before, we had $\thetabar_{r, u_i}(s, b) \geq 2^{-i+2} \varepsilon$ by property (2) before this step. Therefore, after the current step, we have $\thetabar_{r, u_i}(s, b) \geq 2^{-i+1} \varepsilon$, and $\min\limits_{(a, b) \in \calX_r \times \calX_{v_i}} \thetabar_{r v_i}(a, b) \geq 0$. Thus, property (1) also holds for $|\alpha| = 2$.

        Finally, consider the case $|\alpha = 3|$. Consider the values $\theta_{r u_i v_i}$. In the third bracket from Equation~\eqref{eq_reparameterization_proof}, we have decreased $\thetabar_{r u_i v_i}(s, *, a_i)$. It is enough to show that these values were increased (by at least the same amount) in the first two brackets. Indeed, for the triples $(s, c, a_i)$ with $(c, a_i)$ inactive, we have increased the term $\thetabar_{r u_i v_i}(s, c, a_i)$ in the first bracket. For the triples $(s, b, a_i)$ with $(b, a_i)$ active, we have increased the term $\thetabar_{r u_i v_i}(s, b, a_i)$ in the second bracket. Thus, none of the $\thetabar_{\alpha}(a_\alpha)$ were decreased for $|\alpha = 3|$, and property (1) holds.

        \item For any label $a_j$ that was deleted before this round (so, $j < i$), we had $\thetabar_{r v_j}(s, a_j) \geq 2^{-i+2} \varepsilon$. Potentially decreasing $\thetabar_{r v_j}(s, a_j)$ by $2^{-i+1} \varepsilon$ does not bring the value below $2^{-i+1} \varepsilon$, so property (2) holds for these $a_j$. For the currently deleted $a_i$, in the third bracket of Equation~\eqref{eq_reparameterization_proof}, we have increased $\thetabar_{r v_j}(s, a_j)$ by $2^{-i+1} \varepsilon$, so property (2) also holds.

        \item Before this step, for any inactive pair $(a, b) \in \calX_u \times \calX_v$, we had $\thetabar_{uv}(a, b) \geq 2^{-i+2} \varepsilon$. Some inactive pairs got decreased by $2^{-i+1} \varepsilon$ in the first bracket of Equation~\eqref{eq_reparameterization_proof}. Thus, we have the inequality $\thetabar_{uv}(a, b) \geq 2^{-i+1} \varepsilon$, and property (3) holds.
    \end{enumerate}

   Consider the final step. Recall that $a_K = s$. The deletion happened, since all the labels from $u_K$ were either deleted, or did not form an active pair with $s$. By properties (2) and (3), we have $\thetabar_{r u_K}(s, a) \geq 2^{-K + 2} \varepsilon$ for all $a \in \calX_{u_K}$. We can do the reparameterization
    \begin{equation}
        \begin{cases}
            \thetabar_{r u_K}(s, *) \leftarrow \thetabar_{r u_K}(s, *) - 2^{-K + 2} \varepsilon \\
            \thetabar_r(s) \leftarrow \thetabar_r(s) + 2^{-K + 2} \varepsilon
        \end{cases}
    \end{equation}
    After this, no minimum of any term has decreased, but $\thetabar_r(s)$ has a positive increase, as claimed.
\end{proof}

\begin{proof}[Proof of Theorem~\ref{th_tightening_group_branching}]
Consider the following reparameterization. As in the proof of Theorem~\ref{th_strict_tightening}, assume that $\min_{x_\alpha} \theta_{x_\alpha}(x_\alpha) = 0$ for all $\alpha \in \calF'$, and start with the reparameterization $\thetabar = \theta$. For every edge $(A, C) \in \calE'$, perform the following changes. Assume $A \subseteq \calX_u$, $C \subseteq \calX_v$.
    \begin{equation}
    \begin{aligned}
        &\begin{cases}
            \thetabar_{u, v}(a, c) \leftarrow \thetabar_{u, v}(a, c) - f_{C} \\
            \thetabar_{r u v}(*, a, c) \leftarrow \thetabar_{r u v}(*, a, c) + f_{C}
        \end{cases}
        \quad &&\text{for } (a, c) \text{ inactive, } a \in \calX_u, c \in C \\ 
        &\begin{cases}
            \thetabar_{r, u}(s, a) \leftarrow \thetabar_{r, u}(s, a) - f_{C} \\
            \thetabar_{r u v}(s, a, *) \leftarrow \thetabar_{r u v}(s, a, *) + f_{C}
        \end{cases}
        \quad &&\text{for } a \in A \\ 
        &\begin{cases}
            \thetabar_{r, v}(s, c) \leftarrow \thetabar_{r, v}(s, c) + f_{C} \\
            \thetabar_{r u v}(s, *, c) \leftarrow \thetabar_{r u v}(s, *, c) - f_{C}
        \end{cases}
        \quad &&\text{for } c \in C \\ 
    \end{aligned}
    \label{eq_reparameterization_group}
    \end{equation}
    where $f_C$ are some positive numbers to be determined.
    Finally, perform the last step. Observe that all the incoming edges to the sink node $S$ are directed from labels in $\calX_{v_{K-1}}$.
    \begin{equation}
        \begin{cases}
            \thetabar_{r v_{K-1}}(s, *) \leftarrow \thetabar_{r v_{K-1}}(s, *) - f_S \\
            \thetabar_r(s) \leftarrow \thetabar_r(s) + f_S
        \end{cases}
    \end{equation}

    We wish to maximize the value $f_S$ while maintaining $\min_{x_\alpha} \theta_{x_\alpha}(x_\alpha)$ non-decreased. To ensure this, it is sufficient if the values $f_C$ satisfy the following constraints:
    \begin{equation}
        \label{eq_constraints_reparameterization_group}
        \begin{aligned}
            f_C &\leq \varepsilon && \quad \forall C, \\
            f_C &\geq \sum_{A : (C,A) \in \calE'} f_A && \quad \forall C. \\
        \end{aligned}
    \end{equation}
    Indeed, the decrease of $\min_{x_\alpha} \theta_{x_\alpha}(x_\alpha)$ could potentially happen in either of the three brackets of Equation~\eqref{eq_reparameterization_group} or in the final step. 
    
    Consider the first bracket. The decrease of $\thetabar_{u, v}(a, c)$ for $(a, c)$ inactive, $a \in \calX_u$, $c \in C$, happens only once for fixed $a$ and $c$. Therefore, the constraint $f_C \leq \varepsilon$ ensures that $\thetabar_{u, v}(a, c) \geq 0$. 
    
    Consider the second bracket and the last-step-bracket. The decrease of $\thetabar_{r, u}(s, a)$ happens for every $a \in A$ and every outgoing edge of $\calG'$ from $A$. Notice that the value $\thetabar_{r, u}(s, a)$ was increased earlier, say, at a reparameterization step corresponding to $(D, A) \in \calE'$, and $\thetabar_{r, u}(s, a)$ was increased by $f_A$ in the third bracket. Then, this value stays positive if we have $f_A \geq \sum_{C : (A,C) \in \calE'} f_C$, which is precisely the second constraint, up to variable renaming. 
    
    Consider the third bracket. As in the proof of Theorem~\ref{th_strict_tightening}, the terms $\thetabar_{r u_i v_i}(s, *, a_i)$ actually do not decrease, because they are increased by the same amount in either the first or the second brackets. Indeed, consider any triple $(s, a, c)$ such that $a \in \calX_u$, $c \in C$. If $a \in A$ then $\thetabar_{r u v}(s, a, c)$ was increased in the second bracket. If $a \notin A$, then $(a, c)$ must be inactive, and $\thetabar_{r u v}(s, a, c)$ was increased in the first bracket.

    We observe that for a fixed value of $f_S$, we have $f_A \geq B(A) f_S$. This follows from a bottom-up induction on the vertices of $\calG'$. Moreover, the equality $f_A = B(A) f_S$ can be attained if in Equation~\eqref{eq_constraints_reparameterization_group}, the second inequality is always met with equality. 
    
    Therefore, to meet the requirements $f_A \leq \varepsilon$, we need $f_S \leq \frac{\varepsilon}{\max\limits_A B(A)}$. Moreover, in case $f_S = \frac{\varepsilon}{\max\limits_A B(A)}$, there exist values $\{f_A\}$ that satisfy the constraints from Equation~\eqref{eq_constraints_reparameterization_group}.
\end{proof}

\begin{proof}[Proof of Corollary~\ref{cor_cycle}]
    In the setting of this corollary, the graph $\calG$ winds around the cycle $w_1, \ldots, w_d$. We will define a suitable construction of the graph $\calG'$ and then apply Theorem~\ref{th_tightening_group_branching}. 
    
    Let us construct $\calG'$ as follows: for each vertex $w_i$, contract all the nodes $a$ of $\calG$ that satisfy $a \in \calX_{w_i}$. This is a valid construction of the graph $\calG'$, since we never introduce a cycle, and all the edges that are incoming to $a \in \calX_{w_i}$ are coming from the same label set $\calX_{w_{i-1}}$.

    The constructed $\calG'$ is a path. Therefore, $B(A) = 1$ for all $A \in \calV'$. By Theorem~\ref{th_tightening_group_branching}, there exists the desired reparameterization that increases $\thetabar_r(s)$ by $\varepsilon$.
\end{proof}

\section{Proof of Theorem \ref{th:FR}}\label{sec:th:FR:proof}
Define $\calI=CSP_{\varepsilon}^{\le 2}(\theta)$.
Let $\calC=\{\pi_1,\ldots,\pi_k\}$ be a frustrated cycle in $\calG$, with $\pi_i\in\Pi_{v_i}$. Denote $r=v_1$,
and consider label $s\in\calY_r$. We have $s\in\pi_1^{\sigma_1}$
where either $\sigma_1=+$ or $\sigma_1=-$.
Let us fix the label of $r$ to $s$,
i.e.\ update $\calY_r\leftarrow\{s\}$. It suffices to show that running AC on $\calY$ will remove $s$ from $\calY_r$.
For $i\in[k+1]$ define $\sigma_i=\sigma_1$ if the path $(\pi_1,\ldots,\pi_i)$ contains an even number of negative edges, and $\sigma_i=-\sigma_1$
otherwise (where we assume that $\pi_{k+1}=\pi_1$).
We will prove by induction on $i$
that after a certain number of AC steps we will have $\calY_{v_i}\subseteq \pi_i^{\sigma_i}$.
For $i=0$ the claim holds since in the beginning $\calY_{v_1}=\{s\}\subseteq\pi_1^{\sigma_1}$.
Suppose that  it holds for $i\in[k]$; let us prove it for $i+1$. We will only consider the case when edge  
$\{\pi_i,\pi_{i+1}\}$ is positive in $\calG$ (and hence $\sigma_{i+1}=\sigma_i=\sigma$);
for negative edges the argument is analogous.
Since $\{\pi_i,\pi_{i+1}\}$ is positive, for all $(a,b)\in\pi_i^\sigma\times\pi_{i+1}^{-\sigma}$ we must have $\theta_{v_iv_{i+1}}(a,b)> \min_{(a',b')}\theta_{v_iv_{i+1}}(a',b')+\varepsilon$ and so $(a,b)\notin\calY_{v_iv_{i+1}}$.
Also, $\calY_{v_i}\subseteq\pi^\sigma_i$ by the induction hypothesis.
Therefore, the AC algorithm will remove all labels from $\pi^{-\sigma}_{i+1}$. The claim follows.

Since $\calC$ is a frustrated cycle, we must have $\sigma_{k+1}=-\sigma_1$, and so $\calY_r\subseteq\pi_{k+1}^{-\sigma_1}=\pi_1^{-\sigma_1}\subseteq\calX_v-\{s\}$ after some number of AC steps. We also have $\calY_r\subseteq\{s\}$, and hence $\calY_r=\varnothing$.

We saw that fixing every label $s \in \calY_r$ leads to an AC contradiction. This implies that adding all the triplets generated by SAC tightening will admit a reparametrization that strictly increases the lower bound. Indeed, for every label $s \in \calY_r$, we can run the reparametrization constructed in the proof of Theorem~\ref{th_tightening_group_branching}, that increases the value $\thetabar_r(s)$. Let $\lambda^s$ be the corresponding reparametrization vector. Taking any reparametrization from the interior of the convex hull of $\{\lambda^s\}_{s \in \calY_r}$ yields a strict increase of the lower bound. The costs $\thetabar_r(s)$ are increased for every $s \in \calY_r$, and the constraint $\min_{x_\alpha}\thetabar_\alpha(x_\alpha)\ge \min_{x_\alpha}\theta_\alpha(x_\alpha)$ holds for every term $\alpha$, since these constraints are linear in $\lambda$ and hold for every reparametrization vector $\lambda^s$.

\section{Tables}

In this paragraph we give tables for the results. In case runnings were stopped at 300 seconds, we give the best lower bounds obtained in 50, 100, 200 and 300 seconds. Otherwise, we report the best lower bound that was obtained and the corresponding runtime.

\subsection{1CKK}
\begin{tabular}{ |c| c|c|c|c|} 
\hline
algorithm & 50 s & 100 s & 200 s & 300 s \\
\hline
FR &  -12916.2& -12916.2 & -12856.2 & -12856.2 \\
\hline
SAC& -12793.9& -12776.4& -12771.8 & -12764 \\
\hline
FR1 &&-12985.2 & -12985.2 & -12985.2 \\
\hline
toulbar2&  -12712.414 &&&\\
\hline
\end{tabular}

Here, toulbar2 finds the best bound (-12712.414) in 24.84 seconds.

\subsection{driver}
\begin{tabular}{|c|c|c|}
\hline
algorithm & best bound & time [s] \\
\hline
FR & 0.15009 & 4.28 \\
\hline
SAC & 0.347921 & 48.43 \\
\hline
FR1 & 0.122707 & 2.82 \\
\hline
toulbar2 & 0.348 & 1.03 \\
\hline
\end{tabular}

\subsection{grid20x20.f5.wrap}
\begin{tabular}{|c|c|c|}
\hline
algorithm & best bound & time [s] \\
\hline
FR & -1543 & 6.74 \\
\hline
SAC & -1542.66 & 48.43 \\
\hline
FR1 & -1547.58 & 3.48 \\
\hline
toulbar2 & -1844.635 & 3.00 \\
\hline
\end{tabular}

\subsection{grid40x40.f2}
\begin{tabular}{|c|c|c|}
\hline
algorithm & best bound & time [s] \\
\hline
FR & -2689.35 & 7.66 \\
\hline
SAC & -2689 & 5.81 \\
\hline
FR1 & -2891.79 & 0.22 \\
\hline
toulbar2 & -3112.997 & 5.00 \\
\hline
\end{tabular}

\subsection{grid80x80.f2}
\begin{tabular}{|c|c|c|}
\hline
algorithm & best bound & time [s] \\
\hline
FR & -10778.2 & 52.14 \\
\hline
SAC & -10772.9 & 47.05 \\
\hline
FR1 & -10789.2 & 55.25 \\
\hline
toulbar2 & -11922.615 & 50.00 \\
\hline
\end{tabular}

\subsection{Grids\_18}
\begin{tabular}{|c|c|c|}
\hline
algorithm & best bound & time [s] \\
\hline
FR & -13088 & 13.31 \\
\hline
SAC & -13065.9 & 13.51 \\
\hline
FR1 & -13116.7 & 13.92 \\
\hline
toulbar2 & -16172.240 & 12.00 \\
\hline
\end{tabular}

\subsection{ObjectDetection\_11}
\begin{tabular}{ |c| c|c|c|c|} 
\hline
algorithm & 50 s & 100 s & 200 s & 300 s \\
\hline
FR &  -16697.8& -15185.9 & -14073.4 & -13627.3 \\
\hline
SAC& -172463.3& -14584.1& -12286.3 & -11653 \\
\hline
FR1 & -16877.4& -15772.3 & -15035.2 & -14739.2 \\
\hline
toulbar2&  -23310.33 &-22888.296&-22773.707&-22533.442\\
\hline
\end{tabular}

\subsection{pdb1jmx}
\begin{tabular}{|c|c|c|}
\hline
algorithm & best bound & time [s] \\
\hline
FR & 1149.91 & 1.83 \\
\hline
SAC & 1154.61 & 2.48 \\
\hline
FR1 & 1154.01 & 2.39 \\
\hline
toulbar2 & 1154.608 & 2.38 \\
\hline
\end{tabular}

\subsection{ProteinFolding\_11}
\begin{tabular}{|c|c|c|}
\hline
algorithm & best bound & time [s] \\
\hline
FR & -4519.97 & 2.09 \\
\hline
SAC & -4517.32 & 0.93 \\
\hline
FR1 & -4535.18 & 2.04 \\
\hline
toulbar2 & -5388.646 & 2.00 \\
\hline
\end{tabular}

\subsection{rovers02ac.wcsp}
\begin{tabular}{|c|c|c|}
\hline
algorithm & best bound & time [s] \\
\hline
FR & 0.409512 & 4.26 \\
\hline
SAC & 0.421597 & 10.12 \\
\hline
FR1 & 0.38553 & 2.65 \\
\hline
toulbar2 & 0.422 & 0.09 \\
\hline
\end{tabular}

\subsection{rus\_50\_100\_4\_1}
\begin{tabular}{ |c| c|c|c|c|} 
\hline
algorithm & 50 s & 100 s & 200 s & 300 s \\
\hline
FR &  -13525 & -12791.8 & -11908.8 & -11348.6 \\
\hline
SAC& -14030.7 & -12985.5 & -11140 & -9840 \\
\hline
FR1 & -14061.1 & -13563.1 & -12749 & -12424.1 \\
\hline
toulbar2&  -12966.472 & -12807.1 & -12497.016 & -12189.966 \\
\hline
\end{tabular}

\subsection{satellite01ac.wcsp}
\begin{tabular}{|c|c|c|}
\hline
algorithm & best bound & time [s] \\
\hline
FR & 0.115266 & 0.62 \\
\hline
SAC & 0.174434 & 10.12 \\
\hline
FR1 & 0.122888 & 0.78 \\
\hline
toulbar2 & 0.174 & 0.09 \\
\hline
\end{tabular}

\subsection{wcsp\_13}
\begin{tabular}{|c|c|c|}
\hline
algorithm & best bound & time [s] \\
\hline
FR & 0.489001 & 0.33 \\
\hline
SAC & 0.489001 & 4.99 \\
\hline
FR1 & 0.489001 & 0.38 \\
\hline
toulbar2 & 0.489 & 0.05 \\
\hline
\end{tabular}

\subsection{zenotravel02ac.wcsp}
\begin{tabular}{|c|c|c|}
\hline
algorithm & best bound & time [s] \\
\hline
FR & 0.227827 & 0.29 \\
\hline
SAC & 0.227827 & 1.02 \\
\hline
FR1 & 0.203196 & 0.35 \\
\hline
toulbar2 & 0.228 & 0.06 \\
\hline
\end{tabular}


\end{document}